\newcommand{\blind}{1}
\theoremstyle{plain}
\newtheorem{theorem}{Theorem}[section]
\newtheorem{proposition}[theorem]{Proposition}
\newtheorem{lemma}[theorem]{Lemma}
\newtheorem{corollary}[theorem]{Corollary}
\theoremstyle{definition}
\newtheorem{definition}[theorem]{Definition}
\theoremstyle{remark}
\newtheorem{remark}[theorem]{Remark}
\numberwithin{equation}{section}
\newtheorem*{theorem*}{Theorem}
\newtheorem{fact}{Fact}
\newcommand{\w}{{\mathbf{w}}}
\newcommand{\x}{{\mathbf{x}}}
\newcommand{\E}{{\mathbb{E}}}
\renewcommand{\P}{{\mathbb{P}}}
\newcommand{\com}[1]{\textcolor{blue}{\texttt{[#1]}}}
\newcommand{\Var}{{\rm Var}}
\newcommand{\tX}{{\widetilde{X}}}
\renewcommand{\epsilon}{\varepsilon}
\renewcommand{\P}{{\mathbb{P}}}
\renewcommand{\P}{\mathbb{P}}
\def\getangle(#1)(#2)#3{%
  \begingroup%
    \pgftransformreset%
    \pgfmathanglebetweenpoints{\pgfpointanchor{#1}{center}}{\pgfpointanchor{#2}{center}}%
    \expandafter\xdef\csname angle#3\endcsname{\pgfmathresult}%
  \endgroup%
}
\begin{document}

\def\spacingset#1{\renewcommand{\baselinestretch}%
{#1}\small\normalsize} \spacingset{1}


\if1\blind
{
  \title{\bf Edgeworth Accountant: An Analytical Approach\\ to Differential Privacy Composition\thanks{Accepted at JASA}
  }
  \author{Hua Wang$^{1}$ \and Sheng Gao$^{1}$ \and Huanyu Zhang$^{2}$ \and Milan Shen$^{2}$ \and Weijie Su$^{1}$ \and Jiayuan Wu$^{1}$}
\date{
{
$^1$Department of Statistics and Data Science, University of Pennsylvania\\
$^2$Meta Platforms, Inc.\\
\vspace{4ex}
}}
  \maketitle
} \fi




\if0\blind
{
  \bigskip
  \bigskip
  \bigskip
  \begin{center}
    {\LARGE\bf Edgeworth Accountant: An Analytical Approach\\ to Differential Privacy Composition}
\end{center}
  \medskip
} \fi

\bigskip
\begin{abstract}

In privacy-preserving data analysis, many procedures and algorithms are structured as compositions of multiple private building blocks. As such, an important question is how to efficiently compute the overall privacy loss under composition. This paper introduces the Edgeworth Accountant, an analytical approach to composing differential privacy guarantees for private algorithms. Leveraging the $f$-differential privacy framework~\citep{dong2019gaussian}, the Edgeworth Accountant accurately tracks privacy loss under composition, enabling a closed-form expression of privacy guarantees through privacy-loss log-likelihood ratios (PLLRs). As implied by its name, this method applies the Edgeworth expansion to estimate and define the probability distribution of the sum of the PLLRs. Furthermore, by using a technique that simplifies complex distributions into simpler ones, we demonstrate the Edgeworth Accountant's applicability to any noise-addition mechanism. Its main advantage is providing $(\epsilon, \delta)$-differential privacy bounds that are non-asymptotic and do not significantly increase computational cost. This feature sets it apart from previous approaches, in which the running time increases with the number of mechanisms under composition. We conclude by showing how our Edgeworth Accountant offers accurate estimates and tight upper and lower bounds on $(\epsilon, \delta)$-differential privacy guarantees, especially tailored for training private models in deep learning and federated analytics.


\end{abstract}


\noindent%
{\it Keywords:}  Differential Privacy, $f$-Differential Privacy, Edgeworth Expansion, Privacy Accounting
\vfill

\newpage
\spacingset{1.9} 

\section{Introduction}
\label{sec:introduction}

The framework of differential privacy (DP) serves as a mathematically rigorous tool for analyzing and developing algorithms that maintain the privacy of datasets containing sensitive individual information~\citep{dwork2006calibrating}. However, this framework encounters challenges in analyzing the privacy loss of complex algorithms, such as those used in privacy-preserving deep learning and federated analytics~\citep{googleblog2020,wang2021federated}. These complex algorithms are typically composed of simpler private building blocks. A central question in this field is understanding how the overall privacy guarantees deteriorate due to the repeated application of these simple algorithms on the same dataset.

To address this question, there have been developments in relaxing DP and advancing privacy analysis techniques~\citep{dwork2010boosting, dwork2016concentrated, bun2018composable, bun2016concentrated}. This area of research has garnered significant attention following the work of \cite{abadi2016deep}, which introduced the moments accountant technique. This technique provides upper bounds on the overall privacy loss incurred during the training of private deep learning models over multiple iterations. However, the privacy bounds provided by moments accountant are generally not tight, despite being computationally efficient. This limitation arises because the technique relies on R\'enyi DP~\citep{mironov2017renyi} and subsequent works~\citep{balle2018privacy, wang2019subsampled}, which can result in a lossy representation of privacy loss for various mechanisms. An alternative approach involves composing $(\epsilon, \delta)$-DP guarantees using numerical methods like the fast Fourier transform (FFT)~\citep{koskela2020computing, gopi2021numerical}. While effective, this method can be computationally demanding, particularly when a large number of algorithms need to be composed, a common requirement in the training of deep neural networks.

This paper seeks to develop computationally efficient lower and upper privacy bounds, accompanied by accurate estimates, for the composition of private algorithms with finite-sample guarantees\footnote{In this context, ``sample'' refers to the number of compositions of DP algorithms. Henceforth, ``finite-sample'' implies that the bound is non-asymptotic with respect to the number of compositions.}. We utilize a novel privacy definition called $f$-DP~\citep{dong2019gaussian}, which provides a \textit{lossless} interpretation of DP guarantees via a hypothesis testing framework, initially introduced in~\cite{kairouz2015composition}. This allows for precise tracking of privacy loss under composition through a specific operation between functional privacy parameters. Additionally, \cite{dong2019gaussian} devised an approximation tool for evaluating overall privacy guarantees using the central limit theorem (CLT). This tool enables \textit{approximate} $(\epsilon, \delta)$-DP guarantees by leveraging the duality between $(\epsilon, \delta)$-DP and Gaussian DP (GDP, a subfamily of $f$-DP)~\citep{dong2019gaussian}. Although these $(\epsilon, \delta)$-DP guarantees are asymptotically accurate, a usable finite-sample guarantee is currently lacking in the $f$-DP framework.

In this paper, we introduce the \textit{Edgeworth Accountant} as an analytically efficient approach to obtaining finite-sample $(\epsilon, \delta)$-DP guarantees by leveraging the $f$-DP framework. The Edgeworth Accountant employs the Edgeworth approximation~\citep{hall2013bootstrap}, a refined version of the CLT with an improved rate of convergence, to approximate the distribution of certain random variables, namely the privacy-loss log-likelihood ratios (PLLRs). Utilizing a Berry--Esseen type bound developed for the Edgeworth approximation, we establish non-asymptotic upper and lower privacy bounds suitable for applications in privacy-preserving deep learning and federated analytics. Our Edgeworth Accountant's approach is compared at a high level with the GDP approximation as illustrated in Figure~\ref{fig:illustration_comparison}. While the rate of the Edgeworth approximation is relatively straightforward, developing explicit finite-sample error bounds is a complex task. To the best of our knowledge, this is the first instance of such a bound being established in the area of DP, making it an independent contribution in its own right.

To meet diverse practical requirements, we have developed two variants of the Edgeworth Accountant: the Approximate Edgeworth Accountant (AEA) and the Exact Edgeworth Accountant Interval (EEAI). The AEA delivers an asymptotically consistent approximation of the cumulative privacy loss under composition. While the development of this technique is based on $f$-DP and the Edgeworth expansion, this estimate takes a closed-form expression that relates $\epsilon$ to $\delta$, making it computationally more efficient than previous methods \citep{koskela2020computing, gopi2021numerical}. Additionally, the precision of this approach can be enhanced with higher-order Edgeworth expansions if needed. Our experiments demonstrate that the AEA exhibits high accuracy for practical problems in private deep learning and federated analytics. We also introduce an extension of the AEA, named the EEAI, which offers finite-sample lower and upper bounds on the true privacy loss. The EEAI can be used in conjunction with the AEA to provide insights into the estimation error in the overall privacy loss under composition.

A key strength of our accounting algorithm is its exceptional computational efficiency. For composing $m$ identical mechanisms, it achieves computation of privacy loss in constant time, $O(1)$. In more complex scenarios involving $m$ heterogeneous algorithms, its runtime is linear, $O(m)$. This efficiency contrasts with the FFT-based algorithms \citep{gopi2021numerical}, which, despite providing accurate finite-sample bounds, are limited to polynomial runtime in general compositions. Such suboptimal time complexity in FFT-based methods demands extensive resources, particularly as the value of $m$, representing the number of iterations or rounds, often becomes considerably large in deep learning and federated learning. Furthermore, in practical applications where datasets are adaptively used or shared for multiple tasks, a comprehensive tracking of privacy costs for each iteration becomes crucial, thereby increasing the composition count. Our EEAI stands out as the first DP accountant method to offer finite-sample guarantees, optimal time efficiency, and high accuracy for large-scale compositions, making it a valuable addition to the existing toolkit in the research of DP.

\begin{figure}
\begin{equation*}
\begin{tikzpicture}
\node (P0) at (-7cm, 1cm) {\large $f$-DP};
\node (P00) at (-6.8cm, 0.83cm) {};
\node (P1) at (0.5cm, 1cm) {\large \textit{approximate} GDP} ;
\node (P2) at (-7cm, -0.5cm) {\large  
 $(\epsilon, \delta(\epsilon))$-DP
};
\node (P22) at (-6.8cm, -0.27cm) {};
\node (P3) at (5.5cm, 1cm) {\large $(\widehat{\epsilon}_{\text{CLT}}, \widehat{\delta}_{\text{CLT}})$-DP};
\node (P4) at (5.5cm, -0.5cm) {\large $(\widehat{\epsilon}_{\text{EW}}, \widehat{\delta}_{\text{EW}})$-DP};
\getangle(P0)(P1)a;
\getangle(P0)(P2)b;
\draw
(P0) edge[->,>=angle 90] node[above, rotate=\anglea] {\footnotesize
 lossy approximation via CLT} (P1)
(P0) edge[->,>=angle 90] node[below,rotate=\angleb] {\footnotesize PLLR} (P2)
(P22) edge[->,dashed] node{} (P00)
(P1) edge[->,>=angle 90] node[above] {\footnotesize  duality} (P3)
(P2) edge[->,>=angle 90] node[below] {\footnotesize  estimate \& finite-sample bound via Edgeworth approximation} (P4);
\end{tikzpicture}
\end{equation*}
\centering
\caption{The comparison between the GDP approximation in~\cite{dong2019gaussian}, and our Edgeworth Accountant. Both methods start from the exact composition using $f$-DP. Upper:~\cite{dong2019gaussian} uses a CLT type approximation to get a GDP approximation to the $f$-DP guarantee, then converts it to $(\epsilon, \delta)$-DP via duality (Fact \ref{fact:1}). Lower: We losslessly convert the $f$-DP guarantee to an exact $(\epsilon, \delta(\epsilon))$-DP guarantee, with $\delta(\epsilon)$ defined with PLLRs in \eqref{eq:simplify_eps_delta}, and then take the Edgeworth approximation to numerically compute the $(\epsilon, \delta)$-DP. 
}
\label{fig:illustration_comparison}
\end{figure}

The paper is organized as follows. We briefly summarize related work in privacy accounting of DP algorithms as well as our contributions in the remainder of Section \ref{sec:introduction}. In Section \ref{sec:problem setup} we introduce the concept of $f$-DP and its basic properties. We then introduce the notion of privacy-loss log-likelihood ratios in Section \ref{sec:PLLRs} and establish how to use them for privacy accounting based on distribution function approximation. In Section \ref{sec:edgeworth_approximation_bound} we provide a new method, Edgeworth Accountant, that can efficiently and accurately evaluate the privacy guarantees, while providing finite-sample error bounds. Simulation results and conclusions can be found in Sections \ref{sec:experiments} and \ref{sec:discussion}. Proofs and technical details are deferred to the appendices.


\subsection{Motivating applications}\label{sec:motivating_applications}
We now discuss two motivating applications: NoisySGD \citep{song2013stochastic,abadi2016deep,bu2019deep} as well as federated analytics and federated learning \citep{googleblog2020, wang2021federated}. The analysis of DP guarantees of these applications is important yet especially challenging due to the \textit{large} number of compositions involved. Our goal is primarily to devise a general tool to analyze the DP guarantees for these applications.

\noindent\textbf{NoisySGD.} NoisySGD is one of the most popular algorithms for training private deep neural networks. In contrast to the standard SGD, NoisySGD has two additional steps in each iteration: clipping (to bound the sensitivity of the gradients) and noise addition (to guarantee the privacy of models). The details of the NoisySGD algorithm are described in Algorithm \ref{alg:dpsgd} in \Cref{app:algo}.

\noindent\textbf{Federated analytics.}
Federated analytics is a distributed analytical model, which performs statistical tasks through the interaction between a central server and local devices. To complete a global analytical task, in each iteration, the central server randomly selects a subset of devices to carry out local analytics and then aggregates results for the statistical analysis. The total number of iterations is usually very large\footnote{The number of iterations can be small for a single analytical task. However, in most practical cases, many statistical tasks are performed on the same base of users which leads to a large number of total iterations.} in federated analytics, requiring a tight analysis of its DP guarantee.

\subsection{Related work}

\noindent \textbf{Moments accountant and R\'enyi DP.} \cite{abadi2016deep} proposed moments accountant that uses R\'enyi DP~\citep{mironov2017renyi} to give an upper bound for the DP guarantee of composition of private algorithms.
With the help of moments accountant, \cite{abadi2016deep} proposed the DP stochastic gradient descent (DP-SGD) algorithm, whose privacy loss can be bounded effectively. However, as mentioned before, R\'enyi DP can only yield lossy conversion to $(\epsilon, \delta)$-DP, making the upper bound often impractical to use. The runtime of the accountant is independent of $m$, the number of compositions, for DP-SGD, and is $O(m)$ for the composition of general algorithms.

\noindent \textbf{Numerical composition via FFT.} Another line of work~\citep{koskela2020computing, gopi2021numerical} approximated the privacy loss of compositions using the FFT to the convolutions of privacy-loss random variables (PRVs). This notion is closely related to our definition of PLLRs. Though both definitions allow for computing compositions via understanding convolutions of random variables, we note that the two concepts stem from a different analysis framework. Specifically, PRV amounts to finding a pair of random variables that re-parameterizes the privacy curve, which is dual to the trade-off function. On the other hand, PLLRs are defined naturally from the $f$-DP's hypothesis testing perspective, hence the random variables have a direct decomposition into sum of the log-likelihood ratios. As a result, our Proposition \ref{prop:eps_delta_f_dual} is a strict generalization which encompasses their Theorem 3.2 as a special case when $m = 1$. Note that their FFT accountant is the first algorithm that can approximate the privacy loss up to arbitrary precision, and the runtime of their algorithm is $\widetilde{O}(\sqrt{m})$ for DP-SGD and $O(m^{2.5})$ for general compositions.

\noindent \textbf{Analytical composition via characteristic functions.} Recently, \cite{zhu2021optimal} proposed using characteristic functions to analytically compute composition of privacy algorithms. Their algorithm, Analytical Fourier Accountant, yields tight privacy accounting but fails to perform efficient computation for the sub-sampled mechanisms. Their time complexity is $O(1)$ if the characteristic function of their dominating PLD of $m$-fold is simple enough for closed-form composition, and is at least $\Omega(m^2)$ when no closed-form solution was available.

\noindent \textbf{$f$-DP accountant via Edgeworth expansion.} It is worth mentioning that \cite{zheng2020sharp} also used Edgeworth expansion for DP guarantees. 
Specifically, they used Edgeworth approximation as a refinement to the CLT to better approximate the $f$-DP trade-off curve. The most important difference between the two approaches is that we provide a finite-sample error bound that allows for an
exact DP accountant, while they focus solely on an asymptotic approximation to the
trade-off curves. Also, we use Edgeworth approximation on PLLRs to get an estimate of the exact characterization of ($\epsilon$, $\delta$)-DP (the lower path in Figure \ref{fig:illustration_comparison}), while they directly approximate trade-off function $f$ (the same as GDP, using the upper path in Figure \ref{fig:illustration_comparison}). Therefore,  we focus more on the practical side (finite-sample guarantee), and interpretability (directly deal with
($\epsilon$, $\delta$)-DP).

We summarize the comparison of performance of different accountants in Table \ref{table:table1}. Specifically, we focus on their theoretical guarantees and the runtime complexity\footnote{We emphasize that all the $O$ notation in the runtime complexity is up to a logarithmic term of $m$, this is primarily to account for the numerical error in modern computers, and for the fact that expressing the number $m$ itself already requires a time complexity of $O(\log m)$. A detailed analysis of runtime complexity is deferred to \Cref{app:error_analysis}.} when the number of compositions is $m$. We also demonstrate the performance of AEA, GDP, FFT, and MA for small number of compositions in Figure \ref{fig:exact_compare}. Here, we consider the case of composing $m = 1500$ Gaussian mechanisms with $\sigma = 80$. 

\begin{figure}
\centering
\label{fig:exact_compare}
\includegraphics[width=0.5\linewidth]{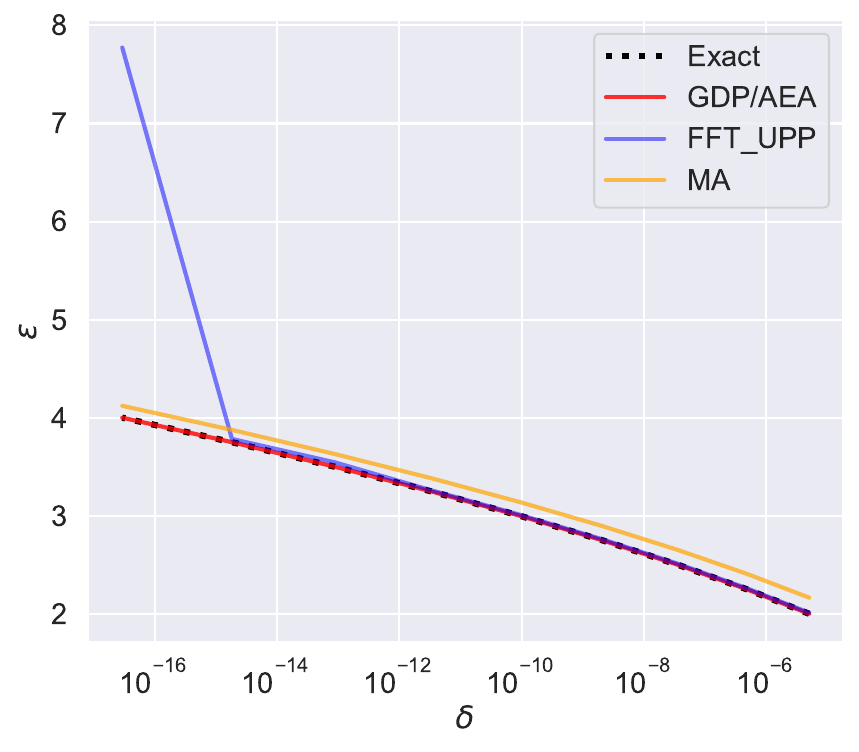}
\caption{Results of estimating/bounding the privacy parameter $\varepsilon(\delta)$ for the composition of $1500$ Gaussian mechanisms with $\sigma = 80$. Curves of ``Exact'' and ``GDP/AEA'' overlap with each other as the latter is exact for composed Gaussian mechanisms. The curve of ``FFT\_UPP''  is close to the ``Exact'' curve for most values of $\delta$, yet exploded when $\delta$ is very small.}
\end{figure}
\begin{table*}
[!htb]{
\footnotesize
	\centering	\begin{tabular}{c|c|c|c}
		\hline
		{Method}& {Finite-sample guarantee} & {Tightness of guarantee} & {Computational complexity}
		\\\hhline{=|=|=|=}
	    GDP/GDP-E &No&N/A&$O(1),~ O(m)$
		\\\hline
		MA & Only upper bound& Loose conversion to $(\epsilon, \delta)$-DP & $O(1), ~O(m)$
		\\\hline
		FFT &Yes&Yes&$O(\sqrt{m}), ~O(m^{2.5})$
		\\\hline
		\textbf{EA} &Yes&Yes$^*$& $O(1), ~O(m)$
		\\\hline
	\end{tabular}
	\caption{ Comparison among different DP accountants. 
	Each entry in the computation complexity contains two columns: (Left) the runtime for the  composition of $m$ identical algorithms; (Right) the runtime for the composition of $m$ general algorithms. 
	GDP: the Gaussian differential privacy accountant~\citep{dong2019gaussian}; GDP-E: the Edgeworth refinement to the GDP accountant~\citep{zheng2020sharp}; MA: the moments accountant using R\'enyi-DP~\citep{abadi2016deep}; FFT: the fast Fourier transform accountant for privacy random variables~\citep{gopi2021numerical}; \textbf{EA}: the Edgeworth Accountant we propose, including both the AEA (Definition \ref{def:AEA}), and the EEAI (Definition \ref{def:EEAI}). The guarantee of EA is tight when the order of the Edgeworth expansion $k$ is high, or when $m$ is large for $k = 1$. }
	\label{table:table1}
}
\end{table*}

\subsection{Our contributions}\label{sec:contribution}
We now briefly summarize our three main contributions.
\par
\noindent \textbf{Improved time-complexity and estimation accuracy.} We propose a new DP accountant method, termed Edgeworth Accountant, which gives finite-sample error bounds in constant/linear time complexity for the composition of identical/general mechanisms. In practice, our method outperforms GDP and moments accountant, with almost the same runtime.

    \noindent \textbf{A unified framework for efficient and computable evaluation of $f$-DP guarantee.} Though the evaluation of $f$-DP guarantee is $\#$P-hard~\citep{murtagh2016complexity}, we provide a general framework to efficiently approximate it in certain important applications. Leveraging this framework, any approximation scheme to the CDFs of the sum of PLLRs can directly transform to a new DP accountant.

    \noindent \textbf{Effcient DP parameter estimation with better accuracy.} 
    Our approximate Edgeworth Accountant is computationally efficient for estimating privacy parameters conveniently under any number of compositions. Compared to moments accountant, AEA can give an estimate that is accurate asymptotically. By leveraging Edgeworth expansion, AEA achieves better accuracy compared to GDP-based methods and the performance can be further improved with higher order expansions.  
    
    \noindent \textbf{Exact finite-sample Edgeworth bound.} To the best of our knowledge, we are the first to use the Edgeworth expansion with finite-sample bounds in the literature of DP. The analysis of the finite-sample bound of the Edgeworth expansion is of its own interest, and has many potential applications. We further derive an explicit adaptive exponential decaying bound for the Edgeworth expansion of the PLLRs, which is the first such result for the Edgeworth expansion.

\section{Preliminaries and Problem Setup}
\label{sec:problem setup}

In this section, we first define the notion DP and its extension $f$-DP.  We then set up the problem  by revisiting our motivating applications. 

A DP algorithm promises that an adversary with perfect information about the entire private dataset in use -- except for a single individual -- would find it hard to distinguish between its presence or absence based on the output of the algorithm~\citep{dwork2006calibrating}. Formally, for $\epsilon > 0 $, and $0 \le \delta < 1$, we consider a (randomized) algorithm $M$ that takes as input a dataset.
\begin{definition}\label{def:DP}
A randomized algorithm $M$ is $ (\epsilon, \delta)$-DP if for any neighboring dataset $ S, S^{\prime} $ differing by an arbitrary sample, and for any (measurable) event $E$, $\mathbb{P}[M(S) \in E] \leqslant \mathrm{e}^{\epsilon} \cdot \mathbb{P}\left[M\left(S^{\prime}\right) \in E\right] + \delta.$
\end{definition}
 In \cite{dong2019gaussian}, the authors propose to use the trade-off between the type-I error and type-II error in place of a few privacy parameters in $(\epsilon, \delta)$-DP. To formally define this new privacy notion, we denote by $P$ and $Q$ as the distribution of $M(S)$ and $M(S^\prime)$ for any two neighboring datasets $S$ and $S^\prime$ and a randomized algorithm $M$.  Let $\phi$ be a (possibly randomized) rejection rule for a hypothesis testing, where $H_0: P  \text{  vs.  }H_1: Q$.  The trade-off function $f$ between $P$ and $Q$ is then defined as the mapping between type-I error to type-II error, that is, $f = T(P, Q):
\alpha  \mapsto \inf _{\phi}\left\{1-\mathbb{E}_{Q}[\phi]: \mathbb{E}_{P}[\phi] \leqslant \alpha\right\}.$ 
This motivates the following definition.
\begin{definition}
A (randomized) algorithm $M$ is $f$-DP if
 $T\left(M(S), M\left(S^{\prime}\right)\right) \geqslant f$
for all neighboring datasets $S$ and $S^{\prime}$.
\end{definition}
We note that the function $f$ is any valid trade-off function $f: [0, 1] \to [0, 1]$ that is defined in ~\cite{dong2019gaussian}. Following the convention of $f$-DP, without loss of generality, we assume that $f$ is a symmetric trade-off function (see Proposition 2 in \cite{dong2019gaussian})\footnote{Note that for ease of read, we always present the final privacy guarantee by a symmetric trade-off function $f$. Therefore, Fact \ref{fact:1} is stated only for $\epsilon\ge 0$. We note as a technical remark that, however, the intermediate ordered pairs of hypotheses may not always be symmetric; for such, the associated $\delta(\epsilon)$ is naturally indexed by $\epsilon\in \mathbb{R}$.}. We highlight the following facts about $f$-DP have been established in~\cite{bu2019deep, dong2019gaussian}.

\begin{fact}[Duality to $(\epsilon, \delta)$-DP]\label{fact:1}
A mechanism is $f$-DP if and only if it is $(\epsilon, \delta(\epsilon))$-DP for all $\epsilon\ge0$, with $\delta(\epsilon) = 1 + f^*(-e^\epsilon)$. Here $g^*(y) = \sup_{-\infty<x<\infty} yx - g(x)$ is the convex conjugate of $g$. 
\end{fact}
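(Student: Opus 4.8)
The plan is to route between $f$-DP and $(\epsilon,\delta)$-DP through the hypothesis-testing characterization of $(\epsilon,\delta)$-DP and then read off $\delta(\epsilon)$ by convex duality; this is essentially the argument of~\cite{dong2019gaussian, bu2019deep}, which I sketch here.

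First I would recall the classical equivalence that a mechanism is $(\epsilon,\delta)$-DP if and only if it is $f_{\epsilon,\delta}$-DP, where $f_{\epsilon,\delta}(\alpha)=\max\{0,\,1-\delta-e^{\epsilon}\alpha,\,e^{-\epsilon}(1-\delta-\alpha)\}$ on $[0,1]$ (the $(\epsilon,\delta)$-DP trade-off function). Given this, the assertion ``$M$ is $(\epsilon,\delta(\epsilon))$-DP for every $\epsilon>0$'' becomes ``$T(M(S),M(S'))\ge \sup_{\epsilon>0}f_{\epsilon,\delta(\epsilon)}$ for all neighboring $S,S'$'', and the whole fact reduces to the analytic identity $\sup_{\epsilon>0}f_{\epsilon,\,1+f^{*}(-e^{\epsilon})}=f$ for every trade-off function $f$; here I would also note at the outset the (standard) reduction to a symmetric $f$, since $(\epsilon,\delta)$-DP is symmetric in the two neighboring datasets.

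Next comes the one-sided reduction. For fixed $\epsilon$, write $L_{\epsilon}^{\delta}(\alpha)=1-\delta-e^{\epsilon}\alpha$; its functional inverse is exactly $\alpha\mapsto e^{-\epsilon}(1-\delta-\alpha)$, the second linear piece of $f_{\epsilon,\delta}$. Since a trade-off function is non-increasing, convex, bounded below by $0$, and (WLOG) an involution $f=f^{-1}$, one has $f\ge L_{\epsilon}^{\delta}\iff f\ge f_{\epsilon,\delta}$: the forward direction is immediate, and for the reverse, $f=f^{-1}\ge (L_{\epsilon}^{\delta})^{-1}$ because inversion is order-preserving on non-increasing functions, while $f\ge 0$ always. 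But $f(\alpha)\ge 1-\delta-e^{\epsilon}\alpha$ for all $\alpha\in[0,1]$ is equivalent to $\delta\ge 1+\sup_{\alpha}\bigl(-e^{\epsilon}\alpha-f(\alpha)\bigr)$, and extending $f$ by $+\infty$ outside $[0,1]$ this supremum is precisely $f^{*}(-e^{\epsilon})$. Hence $\delta(\epsilon):=1+f^{*}(-e^{\epsilon})$ is the least $\delta$ for which $f\ge f_{\epsilon,\delta}$. Assembling: if $M$ is $f$-DP then $f\ge f_{\epsilon,\delta(\epsilon)}$, so $M$ is $f_{\epsilon,\delta(\epsilon)}$-DP, i.e.\ $(\epsilon,\delta(\epsilon))$-DP, for every $\epsilon$; conversely, if $M$ is $(\epsilon,\delta(\epsilon))$-DP for all $\epsilon$ it is $g$-DP with $g=\sup_{\epsilon>0}f_{\epsilon,\delta(\epsilon)}$, and using that each $f_{\epsilon,\delta(\epsilon)}$ dominates the affine minorant of slope $-e^{\epsilon}$ and, via its second piece and the conjugate symmetry $f^{*}(-s)=s\,f^{*}(-1/s)$ valid for symmetric $f$, also the minorant of slope $-e^{-\epsilon}$, one gets $g(\alpha)\ge \sup_{t<0}\bigl(t\alpha-f^{*}(t)\bigr)$; since $f\ge 0$ and $f^{*}(t)=t$ for $t\ge 0$ make the $t\ge 0$ terms harmless, $g\ge f^{**}=f$, the last equality because a trade-off function is closed and convex. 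This gives the equivalence with the stated $\delta(\epsilon)$.

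The main obstacle is exactly this analytic core: showing that the single supporting-line constraint of slope $-e^{\epsilon}$ recaptures the full two-piece function $f_{\epsilon,\delta(\epsilon)}$, and that sweeping $\epsilon$ over $(0,\infty)$ — hence slopes over $(-\infty,-1)$ from the first piece and $(-1,0)$ from the second — reconstructs all of $f$ as its own biconjugate. This is where symmetry of the trade-off function is essential and where one leans on $f=f^{**}$ for closed convex $f$; everything else is bookkeeping with Legendre conjugates.
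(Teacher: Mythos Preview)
The paper does not actually prove Fact~\ref{fact:1}: it is stated as an established result with a citation to \cite{dong2019gaussian,bu2019deep} and no argument is given in this paper. Your proposal correctly reconstructs the standard proof from those references (the supporting-line/convex-conjugate characterization, symmetry of the trade-off function to recover both linear pieces of $f_{\epsilon,\delta}$, and the biconjugate identity $f^{**}=f$), so there is nothing to compare against here beyond noting that your sketch matches the cited source.
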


\begin{fact}[Composition]
Letting $M_{1}$ and $M_{2}$ be two mechanisms, we define their composition algorithm $M$ as $M(S)=\left(M_{1}(S), M_{2}\left(S, M_{1}(S)\right)\right)$.  In general, the composition of more than two algorithms follows recursively.  Given trade-off functions $f=T(P, Q)$ and $g=T\left(P^{\prime}, Q^{\prime}\right)$, let $f \otimes g=T\left(P \times P^{\prime}, Q \times Q^{\prime}\right)$. Let $m$ be a positive integer, and let $M_1$, . . . , $M_m$ be $m$ algorithms, such that each algorithm $M_t$ is $f_t$-DP. Then the composition theorem states that their $m$-fold composition algorithm is $f_1 \otimes \cdots \otimes f_M$-DP, which is tight in general.
\end{fact}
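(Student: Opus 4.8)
The plan is to reduce to the two-fold case and then split the work into a \emph{well-definedness} part for the operation $\otimes$ and an \emph{adaptivity} part for the composition inequality. Since the $m$-fold composition is assembled by composing one mechanism at a time and $\otimes$ is associative, it suffices by induction to prove: if $M_1$ is $f_1$-DP and $M_2(\cdot\,,y)$ is $f_2$-DP uniformly in the auxiliary input $y$, then $M=\bigl(M_1,\,M_2(\cdot\,,M_1(\cdot))\bigr)$ is $(f_1\otimes f_2)$-DP, and that $f_1\otimes f_2$ is again a trade-off function.

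Before this one must check that $f\otimes g:=T(P\times P',\,Q\times Q')$ really is a function of $f=T(P,Q)$ and $g=T(P',Q')$ alone, not of the chosen representatives. I would do this through the likelihood-ratio characterization of trade-off functions: $T(P,Q)$ is in one-to-one correspondence with the law of $\mathrm{d}P/\mathrm{d}Q$ (together with the masses of the mutually singular parts), the likelihood ratio of a product factorizes, and hence the LR law under $Q\times Q'$ is the multiplicative convolution of the two LR laws; so $T(P\times P',Q\times Q')$ depends only on $f$ and $g$. Equivalently, via Blackwell's theorem for binary experiments, $T(P,Q)=T(\tilde P,\tilde Q)$ forces $(P,Q)$ and $(\tilde P,\tilde Q)$ to be mutual garblings, garbling commutes with taking products and can only raise the trade-off function, and the two inequalities pin $f\otimes g$ down exactly; the same computation shows $f\otimes g$ is convex, continuous and non-increasing, hence a trade-off function.

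For the two-fold composition inequality, write the joint laws of $M(S)$ and $M(S')$ as $P(\mathrm{d}y_1,\mathrm{d}y_2)=P_1(\mathrm{d}y_1)\,K_{y_1}(\mathrm{d}y_2)$ and $P'(\mathrm{d}y_1,\mathrm{d}y_2)=P_1'(\mathrm{d}y_1)\,K'_{y_1}(\mathrm{d}y_2)$, where $P_1,P_1'$ are the laws of $M_1(S),M_1(S')$ and $K_{y_1},K'_{y_1}$ the conditional laws of $M_2(S,y_1),M_2(S',y_1)$; we know $T(P_1,P_1')\geq f_1$ and $T(K_{y_1},K'_{y_1})\geq f_2$ for every $y_1$. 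Given any rejection rule $\phi(y_1,y_2)$ for $H_0:P$ versus $H_1:P'$, for each fixed $y_1$ the slice $\phi(y_1,\cdot)$ is a rejection rule for $K_{y_1}$ versus $K'_{y_1}$, so its power is bounded by $f_2$ evaluated at the level $\E_{K_{y_1}}[\phi(y_1,\cdot)]$; averaging this inequality over $y_1$ against $P_1$ versus $P_1'$ and invoking $T(P_1,P_1')\geq f_1$ should reproduce the extremal configuration defining $f_1\otimes f_2$. The delicate point — and the step I expect to be the main obstacle — is that $\phi(y_1,\cdot)$ may spend different amounts of type-I budget at different $y_1$, so this averaging must be carried out to recover the full Legendre/convex-conjugate structure of $\otimes$ rather than a lossy bound; the clean way to force this is once more Blackwell dominance, exhibiting the adaptive pair $(P,P')$ as a garbling of the product of the canonical pairs for $f_1$ and $f_2$ (the stage-two garbling kernel being measurable in the stage-one output), whence $T(P,P')\geq f_1\otimes f_2$. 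This is precisely where \cite{dong2019gaussian} invest a careful functional argument to show that letting $M_2$ depend on the output of $M_1$ cannot help the adversary.

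Finally, tightness ``in general'' follows immediately from the well-definedness part: take each $M_t$ to be the canonical mechanism whose output pair on a fixed pair of neighboring datasets realizes $f_t$ exactly, composed \emph{non-adaptively}; then $M(S)$ and $M(S')$ are exactly the product pairs, so $T(M(S),M(S'))=f_1\otimes\cdots\otimes f_m$ with equality, and no $f$-DP guarantee with a pointwise larger $f$ can hold for this composition.
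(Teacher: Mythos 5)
The paper does not prove this Fact; it is cited from \cite{bu2019deep,dong2019gaussian} and used as a black box, so there is no in-paper proof to compare against. Your outline is a sound reconstruction of the argument in \cite{dong2019gaussian}: well-definedness of $\otimes$ via the Blackwell/likelihood-ratio characterization of binary experiments, the two-fold adaptive inequality via garbling of a product pair, tightness via a non-adaptive canonical realization, and the inductive reduction from the $m$-fold to the $2$-fold case using associativity. You also correctly place the crux --- that letting $M_2$ adapt to the first-stage output cannot help the adversary --- exactly where the real work lies.

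That said, the proposal does not close that crux; it names two routes and completes neither. For the budget-averaging route, the bound (total type-II error) $\ge \E_{P_1'}\!\left[f_2(\alpha(y_1))\right]$ subject to $\E_{P_1}[\alpha]\le\bar\alpha$ is not automatically $(f_1\otimes f_2)(\bar\alpha)$: one still needs (i) the factorization of the product likelihood ratio, which makes the Neyman--Pearson test for $P_1\times P_2^*$ versus $Q_1\times Q_2^*$ have slices that are themselves Neyman--Pearson tests at $y_1$-dependent levels, so that the constrained minimum of $\E_{P_1'}[f_2\circ\alpha]$ equals $(f_1\otimes f_2)(\bar\alpha)$ when $T(P_1,P_1')=f_1$ exactly; and (ii) a separate monotonicity step (Blackwell again) to pass from $T(P_1,P_1')\ge f_1$ down to equality. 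For the garbling route, assembling the per-$y_1$ stage-two garbling kernels into one jointly measurable Markov kernel is a measurable-selection problem and is precisely the functional argument carried out in \cite{dong2019gaussian}. As written, the proposal is a correct roadmap that defers the load-bearing step to the reference rather than supplying it.
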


\begin{fact}[Subsampling]
Consider the following two most common subsampling schemes: (1) (Poisson subsampling) for each individual in the dataset $S$, includes its datum in the subsample independently with probability $p$; (2) (Uniform subsampling) draws a subsample of $S$ that is chosen uniformly at random among all the subsets of $S$ that have cardinal $s := |S| p$, for
some $p \in (0, 1)$.
Denote $\operatorname{Id}(\alpha)=1-\alpha$, and suppose an algorithm $M$ is $f$-DP. The subsampling theorem for $f$-DP states that the Poisson subsampled and uniform subsampled algorithms are both $C_p(f)$-DP, where $C_p(f) :=\min\{f_p, f_p^{-1}\}^{**}$ and $f_{p} = p f+(1-p)\operatorname{Id}$. Here $f_p^{-1}$ denotes the inverse of $f_p$. By definition, note that $C_p(f)$ is a symmetric trade-off function, which is necessary to ensure that the trade-off function of the subsampled algorithm is symmetric.
\end{fact}

\begin{fact}[GDP]\label{fact:4}
To deal with the composition of $f$-DP guarantees,~\cite{dong2019gaussian} introduce the concept of $\mu$-GDP, which is a special case of $f$-DP with $f = G_\mu = T(\mathcal{N}(0, 1), \mathcal{N}(\mu, 1))$. They prove that when all the $f$-DP guarantees are close to the identity, their composition is asymptotically a $\mu$-GDP with some computable $\mu$, which can then be converted to $(\epsilon, \delta)$-DP via duality. However, it comes without a finite-sample bound. 
\end{fact}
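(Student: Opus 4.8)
The statement in Fact~\ref{fact:4} is the central-limit theorem for $f$-DP of~\cite{dong2019gaussian}, and here is how I would prove it. The starting point is that a trade-off function carries exactly the information contained in a pair of log-likelihood-ratio distributions. Fix neighboring datasets and write $P_t, Q_t$ for the output laws of $M_t$. For a single pair $(P,Q)$, the Neyman--Pearson lemma says the optimal level-$\alpha$ test thresholds the likelihood ratio $dQ/dP$, so $T(P,Q)$ is the parametric curve $\alpha(\tau) = \P_P(\mathrm{LR} > \tau)$, $\beta(\tau) = \P_Q(\mathrm{LR} \le \tau)$ with $\mathrm{LR} = dQ/dP$; equivalently it is determined by the law of the privacy-loss variable $L = \log(dQ/dP)$ under $P$ and under $Q$. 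Under composition the product laws $\prod_t P_t$ versus $\prod_t Q_t$ have log-likelihood ratio $\sum_t L_t$, so $f_1 \otimes \cdots \otimes f_m = T(\prod_t P_t, \prod_t Q_t)$ is governed by the CDF of $\sum_t L_t$ under each hypothesis. The whole problem thus reduces to a CLT for these two triangular arrays of independent summands.

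Next I would make the hypothesis ``$f_t$ close to $\operatorname{Id}$'' quantitative in terms of moments of $L_t$. Writing $\mu_{0,t} = \E_{P_t} L_t$, $\sigma_{0,t}^2 = \Var_{P_t} L_t$ and likewise $\mu_{1,t}, \sigma_{1,t}^2$ under $Q_t$, one shows (using $\E_{P_t} e^{L_t} = 1$, i.e.\ $\mu_{0,t} = -\mathrm{KL}(P_t\|Q_t)$ and $\mu_{1,t} = \mathrm{KL}(Q_t\|P_t)$) that these quantities, together with the absolute third moments, are all controlled by a single small parameter $\kappa_t$ that plays the role of the chi-square or symmetric KL divergence between $P_t$ and $Q_t$; moreover, to leading order $\sigma_{0,t}^2 \approx \sigma_{1,t}^2 \approx 2\mu_{1,t} \approx -2\mu_{0,t}$ and the cross term $\mu_{0,t} + \mu_{1,t} = o(\kappa_t)$. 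Summing, set $\mu^2 := \lim_m \sum_{t=1}^m \sigma_{0,t}^2$ (assumed to exist, with $\max_t \kappa_t \to 0$ and $\sum_t \kappa_t^{3/2} \to 0$ — the precise meaning of ``close to identity''); these are exactly the conditions under which $\sum_t L_t$, with mean $\sum_t \mu_{0,t} \to -\mu^2/2$ and variance $\to \mu^2$, satisfies the Lindeberg condition.

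Now apply the Lindeberg--Feller CLT twice: under $\prod_t P_t$, $\sum_t L_t \Rightarrow \mathcal{N}(-\mu^2/2,\,\mu^2)$, and under $\prod_t Q_t$, $\sum_t L_t \Rightarrow \mathcal{N}(\mu^2/2,\,\mu^2)$. But $(\mathcal{N}(-\mu^2/2,\mu^2),\,\mathcal{N}(\mu^2/2,\mu^2))$ is precisely the pair of log-likelihood-ratio laws generated by $\mathcal{N}(0,1)$ versus $\mathcal{N}(\mu,1)$, whose trade-off function is $G_\mu$. Finally I would upgrade this convergence of CDFs to convergence of trade-off functions: each $f_1\otimes\cdots\otimes f_m$ and the limit $G_\mu$ is a convex, nonincreasing $[0,1]\to[0,1]$ function; weak convergence of the underlying laws (whose limit CDF is continuous) gives pointwise convergence on a dense set, and pointwise convergence of a uniformly bounded, equi-monotone, equi-convex family is automatically uniform. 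Hence $f_1\otimes\cdots\otimes f_m \to G_\mu$ uniformly, i.e.\ the $m$-fold composition is asymptotically $\mu$-GDP with $\mu$ computable from the per-mechanism second moments.

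The main obstacle I anticipate is the bookkeeping in the second step: one must show that closeness to the identity \emph{at the level of trade-off functions} really does force both the Lindeberg condition and the convergence of the first two moments of the $L_t$ to their Gaussian values, including the delicate cancellation $\mu_{0,t}+\mu_{1,t} = o(\kappa_t)$ and the matching of the two variances, since the limiting $\mu$ depends on getting these leading-order relations exactly right. The passage from CDF convergence to uniform trade-off-function convergence is more routine but still requires the convexity/monotonicity argument rather than a bare Portmanteau statement; everything else is a standard triangular-array CLT.
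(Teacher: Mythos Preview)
The paper does not prove Fact~\ref{fact:4} at all: it is stated in the preliminaries section as one of several facts ``established in previous studies~\cite{bu2019deep, dong2019gaussian}'' and is simply cited from Dong et al.\ without argument. There is therefore no ``paper's own proof'' to compare against.

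That said, your sketch is a correct outline of the argument in~\cite{dong2019gaussian}. The reduction of trade-off functions to the law of the log-likelihood ratio under each hypothesis, the observation that these sum under composition, the Lindeberg--Feller CLT applied to each triangular array, the identification of $(\mathcal{N}(-\mu^2/2,\mu^2),\mathcal{N}(\mu^2/2,\mu^2))$ as the PLLR pair for $G_\mu$, and the upgrade from weak convergence to uniform convergence of the convex trade-off curves---these are exactly the steps Dong et al.\ carry out. Your identification of the delicate step (translating ``$f_t$ close to $\operatorname{Id}$'' into the moment conditions needed for Lindeberg, with the correct leading-order cancellations) is also accurate; in their paper this is handled by parametrizing each $f_t$ via a small divergence-type functional and Taylor-expanding.
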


\noindent With these facts, we can characterize the $f$-DP guarantee for motivating applications in \Cref{sec:motivating_applications}. 

\noindent\textbf{NoisySGD.} For a NoisySGD with $m$ iterations, subsampling ratio of $p$, and noise multiplier $\sigma$, it is $\min\{f, f^{-1}\}^{**}$-DP~\citep{bu2019deep,dong2019gaussian}, with $f=\left(p G_{1/\sigma}+(1-p) \mathrm{Id}\right)^{\otimes m}$.

\noindent\textbf{Federated analytics.}
Suppose there are $m$ tasks, and each task is $f_i$-DP with $f_i = T(P_i, Q_i)$. Then the overall DP guarantee is characterized by $\bigotimes_{i = 1}^m f_i$-DP. 

\noindent It is easy to see that the $f$-DP guarantee of NoisySGD is a special case of the $f$-DP guarantee of federated analytics with each trade-off function being $f_i = \min\{f_p, f_p^{-1}\}^{**}$, that is, with identical composition of subsampled Gaussian mechanisms. Therefore, our goal is to efficiently and accurately evaluate the privacy guarantee of the general $\bigotimes_{i = 1}^m f_i$-DP with an explicit finite-sample error bound.

\section{Privacy-Loss Log-likelihood Ratios}\label{sec:PLLRs}

We aim to compute the explicit DP guarantees for general composition of trade-off functions of the form $    f = \bigotimes_{i = 1}^m f_i.$
For the $i$-th composition, the trade-off function $f_i = T(P_i, Q_i)$ is realized by the two hypotheses:
$H_{0, i}: w_i\sim P_i \text { vs. } H_{1, i}: w_i\sim Q_i, $
where $P_i, Q_i$ are two distributions and $w_i$ denotes the random variable which is the outcome from the $i$-th algorithm. To evaluate the trade-off function $f = \bigotimes_{i = 1}^m f_i$, we are essentially distinguishing between the two composite hypotheses $H_0: \bm{w} \sim P_1\times P_2 \times \cdots \times P_m ~\text{ vs. }~ H_1: \bm{w} \sim Q_1\times Q_2 \times \cdots \times Q_m$,
where $\bm{w} = (w_1, ..., w_m)$ is the concatenation of all $w_i$'s.  Motivated by the optimal test asserted by the Neyman-Pearson Lemma, we give the following definition.
\begin{definition}\label{def:privacy_loss_loglikelihood}
The associated pair of \textit{privacy-loss log-likelihood ratios (PLLRs)} for two hypotheses $H_0$ and $H_1$ is defined as the logarithm of the Radon--Nikodym derivative of the output distribution under the null hypothesis ($P$) and the output distribution under the alternative hypothesis ($Q$), respectively. Specifically, we can express PLLRs with respect to $H_{0, i}$ and $H_{1, i}$ as
$
X_i \equiv \log \left(\frac{dQ_i(\xi_i)}{dP_i(\xi_i)}\right)
, Y_i \equiv \log \left(\frac{dQ_i(\zeta_i)}{dP_i(\zeta_i)}\right),
$
where $\xi_i \sim P_i, \zeta_i \sim Q_i$.\footnote{For completeness, we explicitly require that all $\xi_i$ and $\zeta_i$ be independent.}
\end{definition}

Note that the definition of PLLRs depends solely on the two hypotheses. Importantly, PLLRs are equivalent to privacy loss random variables \citep{balle2018improving, zhu2021optimal, gopi2021numerical}; however, we prefer the former due to its facility in losslessly converting the $f$-DP guarantee to a collection of $(\epsilon,\delta)$-DP guarantees. The following proposition characterizes the relationship between $\epsilon$ and $\delta$ in terms of the distribution functions of PLLRs.



\begin{proposition}
\label{prop:eps_delta_f_dual}
Let $X_{1}, \ldots, X_{m}$ and $Y_{1}, \ldots, Y_{m}$ be the PLLRs defined above. Let $F_{X, m}, F_{Y, m}$ be the CDFs of $X_{1}+\cdots+X_{m}$ and $Y_{1}+\cdots+Y_{m}$, respectively. Define for $\epsilon\in\mathbb{R}$
\begin{equation}
\label{eq:simplify_eps_delta}
    \delta_{X,Y}(\epsilon) := 1-F_{Y,m}(\epsilon)-e^{\epsilon}(1-F_{X,m}(\epsilon)).
\end{equation}
This function fully characterizes the trade-off function associated with the ordered pair of composite hypotheses  $(P_1\times P_2 \times \cdots \times P_m,Q_1\times Q_2 \times \cdots \times Q_m)$.
In particular, when the associated trade-off function $f = \bigotimes_{i = 1}^m f_i$ is symmetric, then the composed mechanism is $f$-DP if and only if it is $(\epsilon, \delta_{X,Y}(\epsilon))$-DP for all $\epsilon\ge 0$.
\end{proposition}

For the complete proof of Proposition \ref{prop:eps_delta_f_dual}, please refer to Section \ref{app:proof_sec_C_1}. 

Proposition \ref{prop:eps_delta_f_dual} expresses the function $\delta_{X,Y}(\epsilon)$ associated with the ordered pair of composite hypotheses in terms of the distribution functions of PLLRs. This reflects the primal-dual relationship between trade-off functions and collections of $(\epsilon, \delta(\epsilon))$-DP guarantees. When the associated trade-off function is symmetric, Fact \ref{fact:1} recovers the usual equivalence between $f$-DP and $(\epsilon, \delta_{X,Y}(\epsilon))$-DP using only $\epsilon\ge0$.

Note that related formulations in terms of privacy loss random variables appear in \cite{balle2018improving, zhu2021optimal, gopi2021numerical}. Our Proposition \ref{prop:eps_delta_f_dual} is particularly useful as it reduces the $\#$P-complete $f$-DP accounting problem to approximating $F_{X,m}$ and $F_{Y,m}$ in (\ref{eq:simplify_eps_delta}). Although the existence of approximation schemes for DP composition has been established in prior work \citep{murtagh2015complexity}, the primary challenge remains in developing tractable and accurate approximation methods. Our key contribution lies in demonstrating that the Edgeworth expansion provides a principled and analytically tractable framework for approximating the CDFs $F_{Y,m}$ and $F_{X,m}$, thereby yielding computationally efficient, accurate, robust, and finite-sample privacy accountants.


Of note, the above relationship is general in the sense that we make no assumption on the private mechanisms. 

Definition \ref{def:privacy_loss_loglikelihood} can be applied directly  when  $\frac{dQ_i(\xi_i)}{dP_i(\xi_i)}$ is easy to compute, which corresponds to the case without subsampling. To deal with the case with subsampling, one must take into account that the subsampled DP guarantee is the double conjugate of the minimum of two asymmetric trade-off functions (for example, recall the trade-off function of a single sub-sampled Gaussian mechanism is $\min\{f_p, f_p^{-1}\}^{**}$, where $f_p = (pG_{1/\sigma} + (1-p)\mathrm{Id})$). In general, the composition of multiple subsampled mechanisms satisfies $f$-DP for $f = \min\{\otimes_{i = 1}^m f_{i, p_i}, \otimes_{i = 1}^m f_{i, p_i}^{-1}\}^{**}$. This general form makes the direct computation of the PLLRs through composite hypotheses infeasible, as it is hard to write $f$ as a trade-off function for some explicit pair of hypotheses $H_0$ and $H_1.$ 
Therefore, instead of using one single sequence of PLLRs directly corresponding to $f$, we shall use a family of sequences of PLLRs.
In general,
suppose we have a mechanism characterized by some $f$-DP guarantee, where $f = \left(\inf_{\alpha\in \mathcal{I}}\{f^{(\alpha)}\}\right)^{**}$, for index set $\mathcal{I}$. That is, $f$ is the tightest possible trade-off function satisfying all the $f^{(\alpha)}$-DP.  Suppose further that for each $\alpha$, we can find a sequence of computable PLLRs corresponding to $f^{(\alpha)}$, which allows us to obtain a collection of $(\epsilon, \delta^{(\alpha)}(\epsilon))$-DP guarantees. 

\begin{lemma}\label{lem:composite_duality}
Suppose that for each $\alpha\in \mathcal{I}$, the function $f^{(\alpha)}$ and the function $\delta^{(\alpha)}:\mathbb{R}\to[0, 1]$ satisfy that a mechanism is $f^{(\alpha)}$-DP if and only if it is $(\epsilon, \delta^{(\alpha)}(\epsilon))$-DP for all $\epsilon$. Let $f := \left(\inf_{\alpha\in \mathcal{I}}\{f^{(\alpha)}\}\right)^{**}$, and $\delta(\epsilon):=\sup_\alpha\{\delta^{(\alpha)}(\epsilon)\}$. Then a mechanism is $f$-DP if and only if it is $(\epsilon, \delta(\epsilon))$-DP for all $\epsilon\in\mathbb{R}$. In particular, when $f$ is symmetric, a mechanism is $f$-DP if and only if it is $(\epsilon, \delta(\epsilon))$-DP for all $\epsilon\ge 0$.
\end{lemma}
We defer the proof of this lemma to the appendices. The intuition is that both $\left(\inf_{\alpha\in \mathcal{I}}\{f^{(\alpha)}\}\right)^{**}$-DP and $(\epsilon, \sup_\alpha\{\delta^{(\alpha)}(\epsilon)\})$-DP correspond to the \textit{tightest} possible DP-guarantee for the entire collection. 

Lemma \ref{lem:composite_duality} allows us to characterize the subsampled Gaussian mechanism using two sequences of PLLRs. As mentioned above, it is $f$-DP with $f = \min\{\otimes_{i = 1}^m f_{i, p}, \otimes_{i = 1}^m f_{i, p}^{-1}\}^{**}$, where each $f_{i, p} = \left(p G_{1/\sigma}+(1-p) \mathrm{Id}\right)$. For the first part, the PLLRs corresponding to $\otimes_{i = 1}^m f_{i, p} = \left(p G_{1/\sigma}+(1-p) \mathrm{Id}\right)^{\otimes m}$ are given by $X_i^{(1)} = \log(1-p+pe^{\mu\xi_i-\frac{1}{2}\mu^2}),$ and $ Y_i^{(1)} = \log(1-p+pe^{\mu\zeta_i-\frac{1}{2}\mu^2}),$ for $1\le i \le m$, with $\xi_i\sim \mathcal{N}(0, 1)$, $\zeta_i\sim p\mathcal{N}(0, 1) + (1-p) \mathcal{N}(\mu,1).$ And for the second part,
the PLLRs corresponding to $\otimes_{i = 1}^m f_{i, p}^{-1} = \left((p G_{1/\sigma}+(1-p) \mathrm{Id})^{-1}\right)^{\otimes m}$ are given by 
    $X_i^{(2)} = -\log(1-p+pe^{\mu\zeta_i-\frac{1}{2}\mu^2}),$ and $Y_i^{(2)} = -\log(1-p+pe^{\mu\xi_i-\frac{1}{2}\mu^2}),$
for $1\le i \le m$, with $\xi_i\sim \mathcal{N}(0, 1)$, $\zeta_i\sim p\mathcal{N}(0, 1) + (1-p) \mathcal{N}(\mu,1).$
Now substituting $F_{X^{(1)}, m}$ and $F_{Y^{(1)}, m}$ by any approximation (for example, using the CLT or Edgeworth), we get a computable relationship in terms of the $(\epsilon, \delta^{(1)}(\epsilon))$-DP; and similarly, we can get a relationship in terms of the $(\epsilon, \delta^{(2)}(\epsilon))$-DP. We conclude that the subsampled Gaussian mechanism is  $(\epsilon, \max\{\delta^{(1)}(\epsilon), \delta^{(2)}(\epsilon)\})$-DP. 
Here we assume we know the form of $X_i^{(\alpha)}$ and $Y_i^{(\alpha)}$ corresponding to $f_i^{(\alpha)}$. This is also assumed for accounting algorithms in \cite{gopi2021numerical, zhu2021optimal} as they require the form of PLLRs or the analytical form of their characteristic functions. We can relax this assumption to only knowing its first fourth moments in Lemma \ref{lem:first_order_edgeworth_bound}.



\subsection{Transferred error bound based on CDF approximations}
\label{sec:eps_bound}
As discussed above, Lemma \ref{lem:composite_duality} allows us to characterize the double conjugate of the infimum of a collection of $f^{(\alpha)}$-DPs via analyzing each sequence of PLLRs separately. As a result, our focus is to compute the bounds of $\delta^{(\alpha)}$ for each single trade-off function $f^{(\alpha)}$.  To fulfill the purpose,
we seek an efficient algorithm for approximating distribution functions of the sum of PLLRs, namely, $F_{X^{(\alpha)},m}, F_{Y^{(\alpha)},m}$. This perspective provides a general framework that naturally encompasses many existing methods, including fast Fourier transform \citep{gopi2021numerical} and the characteristic function method \citep{zhu2021optimal}. 
They can be viewed as different methods for finding upper and lower bounds of $F_{X^{(\alpha)},m}, F_{Y^{(\alpha)},m}$. 
Specifically, we denote the upper and lower bounds of $F_{X^{(\alpha)},m}$ by $F^+_{X^{(\alpha)},m}$ and $F^-_{X^{(\alpha)},m}$, and similarly for $F_{Y^{(\alpha)},m}$.
These bounds can be easily converted to the error bounds on privacy parameters of the form $g^{(\alpha)-}(\epsilon)\le \delta^{(\alpha)}(\epsilon)\le g^{(\alpha)+}(\epsilon)$, for all $\epsilon> 0$, where 
\begin{equation}
\begin{aligned}
     g^{(\alpha)+}(\epsilon) &= 1-F_{Y^{(\alpha)},m}^-(\epsilon)-e^{\epsilon}(1-F_{X^{(\alpha)},m}^+(\epsilon)),\\
     g^{(\alpha)-}(\epsilon) &= 1-F_{Y^{(\alpha)},m}^+(\epsilon)-e^{\epsilon}(1-F_{X^{(\alpha)},m}^-(\epsilon)).
     \label{eq:Edgeworth_upper_lower_bound}
\end{aligned}
\end{equation}
Thus, the DP guarantee of $\left(\inf_\alpha f^{(\alpha)}\right)^{**}$-DP in the form of $(\epsilon, \delta(\epsilon))$ satisfies $
\sup_\alpha \{g^{(\alpha)-}(\epsilon)\}\le \delta(\epsilon) \le \sup_\alpha \{g^{(\alpha)+}(\epsilon)\}$, for all $\epsilon > 0$.
To convert the guarantee of the form $(\epsilon, \delta(\epsilon))$ for all $\epsilon>0$ to the guarantee of the form $(\epsilon(\delta), \delta)$ for all $\delta\in [0, 1)$, we can invert the above bounds on $\delta(\epsilon)$ and obtain the bounds of the form $\epsilon^-(\delta) \le \epsilon(\delta) \le \epsilon^+(\delta)$. Here $\epsilon^+(\delta)$ is the largest root of equation
$\delta = \sup_\alpha\{g^{(\alpha)+}(\cdot)\},$
and $\epsilon^-(\delta)$ is the smallest non-negative root of equation
$    \delta = \sup_\alpha\{g^{(\alpha)-}(\cdot)\}.$ 
\begin{remark}
 In practice, we often need to solve for those roots numerically, and we need to specify a finite range in which we find all the roots. In Appendix 
 \ref{app:algo}, we exemplify how to find such range for NoisySGD, see Remark \ref{rmk:bound_on_g} in the appendices for details. 
\end{remark}

\section{Edgeworth Accountant}
\label{sec:edgeworth_approximation_bound}

In what follows, we present a new approach, \textit{Edgeworth Accountant}, based on the Edgeworth expansion to approximate the distribution functions of the sum of PLLRs. For simplicity, we demonstrate how to obtain the Edgeworth Accountant for any trade-off function $f^{(\alpha)}$ based on a single sequence of PLLRs $\{X_i^{(\alpha)}\}_{i=1}^m, \{Y_i^{(\alpha)}\}_{i=1}^m$. Henceforth, we drop the superscript $\alpha$ when it is clear from the context. Specifically, we derive an approximate Edgeworth Accountant (AEA) and the associated exact Edgeworth Accountant interval (EEAI) for $f$ with PLLRs $\{X_i\}_{i=1}^m$ and $\{Y_i\}_{i=1}^m$. 


\subsection{Approximate Edgeworth Accountant}\label{sec:Edgeworth_Accountant}

To approximate the CDF of a random variable $X = \sum_{i = 1}^mX_i$, we introduce the Edgeworth expansion in its most general form, where $X_i$'s are independent but not necessarily identical. Such generality allows us to account for composition of heterogeneous DP algorithms.
Suppose $\mathbb{E}\left[X_{i}\right]=\mu_i$ and $\gamma_{p, i}:=\mathbb{E}\left[(X_{i}-\mu_i)^{p}\right]<+\infty$ for some $p\ge 4$. We define $B_{m}:=\sqrt{\sum_{i=1}^{m} \mathbb{E}\left[(X_{i}-\mu_i)^{2}\right] }$, and $ \sum_{i=1}^m \mu_i = M_m$. So, the standardized sum can be written as $S_{m}:=(X - M_m) / B_{m}$. We denote $E_{m, k, X}(x)$ to be the $k$-th order Edgeworth approximation of $S_m$. Note that the central limit theorem (CLT) can be viewed as the $0$-th order Edgeworth approximation. The first-order Edgeworth approximation is given by adding one extra order $O(1/\sqrt{m})$ term to the CLT, that is,
$E_{m, 1, X}(x) = \Phi(x)-\frac{\lambda_{3, m}}{6 \sqrt{m}}\left(x^{2}-1\right) \phi(x).$
Here, $\Phi$ and $\phi$ are the CDF and PDF of a standard normal distribution, and $\lambda_{3, m}$ is a  constant to be defined in Lemma \ref{lem:first_order_edgeworth_bound}. 
It is known that the Edgeworth approximation of order $p$ has an error rate of $O(m^{-(p+1) / 2})$ (see, for example, \cite{hall2013bootstrap}). This desirable property motivates us to use the rescaled Edgeworth approximation $G_{m, k, X}(x)=E_{m, k, X}\left((x-M_m)/B_m\right)$ and $G_{m, k, Y}(x)=E_{m, k, Y}\left((x-M_m)/B_m\right)$ to approximate $F_{X, m}(x)$ and $F_{Y,m}(x)$, respectively, in \eqref{eq:simplify_eps_delta}. This is what we term the \textit{approximate Edgeworth Accountant} (AEA).
\begin{definition}[AEA]\label{def:AEA}
The $k$-th order AEA that defines $\delta(\epsilon)$ for $\epsilon>0$ is given by
$    \delta(\epsilon) = 1-G_{m, k, Y}(\epsilon)-e^{\epsilon}(1-G_{m, k, X}(\epsilon))$ for all $\epsilon>0.$
\end{definition}
Notably, the CLT-based approximation can be thought of as a special case corresponding to the zeroth-order Edgeworth expansion ($k = 0$). Therefore, the AEA framework provides a generalization of the GDP accountant, achieving superior convergence rates through the incorporation of higher-order correction terms.

\subsection{Edgeworth Accountant with finite-sample guarantees}\label{sec:exact_Edgeworth_Accountant}
Asymptotically, AEA is an exact accountant, due to the rate of convergence Edgeworth approximation admits. In practice, however, the finite-sample guarantee is still missing since the exact constant of such rate is unknown. To obtain a computable $(\epsilon, \delta(\epsilon))$-DP bound via \eqref{eq:simplify_eps_delta}, we require the finite-sample bounds on the approximation error of the CDF for any finite number of iterations $m$. Suppose that we can provide a finite-sample  bound using Edgeworth approximation of the form
$
    |F_{X,m}(x)-G_{m, k, X}(x)|\le \Delta_{m, k, X}(x),
$
where $\Delta_{m, k, X}(x)$ is computable.
Then  we have
\begin{equation}\label{eq:Edgeworth_approximation_general_bound_form}
\begin{aligned}
    &F_{X,m}^+(x) = G_{m, k, X}(x) + \Delta_{m, k, X}(x), \\
    &F_{X,m}^-(x) = G_{m, k, X}(x) - \Delta_{m, k, X}(x),
    \end{aligned}
\end{equation}
and similarly for $F_{Y,m}$. We now define the \textit{exact Edgeworth Accountant interval} (EEAI).
\begin{definition}[EEAI]\label{def:EEAI}
The $k$-th order EEAI associated with privacy parameter $\delta(\epsilon)$ for $\epsilon>0$ is given by $[\delta^{-},\delta^{+}]$, where for all $\epsilon>0$
\begin{equation}
\label{eq:EEAI}
\begin{aligned}
     \delta^{-}(\epsilon) \equiv & 1-G_{m, k, Y}(\epsilon)-\Delta_{m, k, Y}(\epsilon)
     -e^{\epsilon}(1-G_{m, k,X}(\epsilon)+\Delta_{m, k, X}(\epsilon)),\\
    \delta^{+}(\epsilon) \equiv & 1-G_{m, k, Y}(\epsilon)+\Delta_{m, k, Y}(\epsilon)
    -e^{\epsilon}(1-G_{m, k, X}(\epsilon)-\Delta_{m, k, X}(\epsilon)).
\end{aligned}
\end{equation}
\end{definition}

To bound the EEAI, it suffices to have a finite-sample bound on $\Delta_{m,k, X}(\epsilon)$ and  $\Delta_{m,k, Y}(\epsilon)$.


\subsubsection{Uniform bound on PLLRs}
\label{sec:edge1_bound}
We now deal with the bound of the Edgeworth approximation on PLLRs in \eqref{eq:Edgeworth_approximation_general_bound_form}. 
Our starting point is a uniform bound of the form
$
\Delta_{m, k, X}(x) \le c_{m, k, X},\text{ for all }x.
$ The bound for $\Delta_{m, k, Y}(x)$ follows {identically}.
To achieve this goal, we follow the analysis on the finite-sample bound in~\cite{derumigny2021explicit}.  
We state the bound of the first-order Edgeworth expansion.

\begin{lemma}\label{lem:first_order_edgeworth_bound}
Define the average individual standard deviation $\bar{B}_{m}:=B_{m} / \sqrt{m}$ and the average standardized $r$-th cumulant as $\lambda_{k, m}:=\frac{1}{m} \sum_{j=1}^{m} k_{r, j}/ \bar{B}_{m}^{3}$, where $k_{r,j}$ is the $r$-th centralized cumulant of the $j$-th sample. With bounded moments of order four, that is, $\gamma_{4, i}<+\infty$ for $1\le i \le m$, we have the (uniform) bound on Edgeworth expansion as
\begin{align*}
    &\Delta_{m, 1, X} \leq \frac{0.1995 \widetilde{K}_{3, m}}{\sqrt{m}}\\
    &+\frac{0.031 \widetilde{K}_{3, m}^{2}+0.195 K_{4, m}+0.054\left|\lambda_{3, m}\right| \widetilde{K}_{3, m}+0.038 \lambda_{3, m}^{2}}{m}\\&+r_{1, m},
\end{align*}
where $K_{p, m}:=m^{-1} \sum_{i=1}^{m} \mathbb{E}\left[\left|X_{i}-\mu_i\right|^{p}\right] /\left(\bar{B}_{m}\right)^{p}$, which is the average standardized $p$-th absolute moment, 
and $\widetilde{K}_{3, m}:=K_{3, m}+\frac{1}{m} \sum_{i=1}^{m} \mathbb{E}\left|X_{i}-\mu_i\right| \gamma_{2, i} / \bar{B}_{m}^{3}$. Here $r_{1, m}$ is a remainder term of order $O(1/m^{5/4})$ that depends only on $K_{3, m}, K_{4, m}$ and $\lambda_{3, m}$, and is defined in Equation \eqref{eq:r_1,n} in Appendix~\ref{appendx:e1}.
\end{lemma}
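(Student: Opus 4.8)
The plan is to follow the Edgeworth expansion machinery of \cite{derumigny2021explicit}, which provides a general non-asymptotic Berry--Esseen--type bound for the $k$-th order Edgeworth approximation of a sum of independent (not necessarily identical) random variables with finite moments of a suitable order. Concretely, for $k=1$ we need finite fourth moments, so the statement's hypothesis $\gamma_{4,i}<+\infty$ is exactly what is required. The first step is to recognize that the quantities $B_m$, $\bar B_m$, $\lambda_{3,m}$, $K_{p,m}$, $\widetilde K_{3,m}$ in the lemma are precisely the standardized cumulant/moment quantities that appear in that reference's master bound, so the work is mostly one of careful translation: matching our normalization (dividing the sum by $B_m$) to theirs, and checking that the claimed numerical constants $0.1995$, $0.031$, $0.195$, $0.054$, $0.038$ are what their theorem yields after this matching.

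Next I would spell out the structure of the argument underlying that master bound, so the reader sees why it holds rather than just citing it. The standard route is via the smoothing inequality (Esseen's lemma): bound $\sup_x |F_{X,m}(x)-G_{m,1,X}(x)|$ in terms of the integral of $|\widehat{F_{X,m}}(t)-\widehat{G_{m,1,X}}(t)|/|t|$ over a range $|t|\le T$ plus a term of order $1/T$. One then splits this range into a central part $|t|$ small, where one Taylor-expands the log of the characteristic function of $S_m$ to the order dictated by the finite fourth moment and matches it termwise against the Fourier transform of the first-order Edgeworth polynomial correction $-\frac{\lambda_{3,m}}{6\sqrt m}(x^2-1)\phi(x)$, leaving a remainder controlled by $K_{3,m}$, $K_{4,m}$, $\lambda_{3,m}$; and a tail part, where one bounds $|\mathbb{E} e^{itS_m}|$ away from zero using the individual characteristic functions (this is where an additional smoothness/non-lattice-type input, or a truncation argument, is needed to keep the tail contribution of order $o(1/m)$, absorbed into $r_{1,m}$). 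Collecting the central and tail estimates and optimizing $T$ produces the three groups of terms: the $O(1/\sqrt m)$ term with coefficient $\widetilde K_{3,m}$, the $O(1/m)$ bracket with the quadratic-in-$\widetilde K_{3,m}$ and linear-in-$K_{4,m}$ and $\lambda_{3,m}$ terms, and the higher-order remainder $r_{1,m}=O(m^{-5/4})$ whose explicit form is deferred to \eqref{eq:r_1,n}.

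I would then verify the two places where our setting differs from a generic textbook statement. First, the bound must be \emph{uniform} in $x$ (it is a $\sup_x$ statement), so I would note that the smoothing inequality already delivers a supremum bound and no pointwise/adaptive refinement is attempted here; the adaptive exponentially-decaying refinement promised in the introduction is a separate, later result. Second, the PLLRs $X_i=\log(dQ_i/dP_i)$ are genuine random variables whose fourth moments are finite under the stated assumption, so all the cumulant and absolute-moment quantities are well defined and finite; in particular $\bar B_m>0$ so the normalizations make sense. It is also worth pointing out $\widetilde K_{3,m}\ge K_{3,m}$ and, by Cauchy--Schwarz/Lyapunov, $K_{3,m}\le K_{4,m}^{3/4}$ and $|\lambda_{3,m}|\le K_{3,m}$, so every term in the displayed bound is finite and the bound is genuinely computable from the first four moments of the $X_i$.

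The main obstacle is the tail/large-$t$ estimate of the characteristic function. Unlike the i.i.d.\ case with a fixed smooth density, here the $X_i$ are heterogeneous and the summand distributions are induced by the likelihood-ratio transform, so one must argue that $|\mathbb{E} e^{it S_m}|$ decays fast enough over the relevant range of $t$ — this is exactly the step where \cite{derumigny2021explicit} imposes their regularity condition and where the explicit (as opposed to merely asymptotic) constant in $r_{1,m}$ comes from. Everything else — the Taylor expansion of the cumulant generating function to fourth order, matching against the Hermite-polynomial Edgeworth term, and bookkeeping the numerical constants — is routine but tedious; I would present it as a reduction to the cited theorem rather than redoing it, and confine the explicit constant-chasing to the appendix.
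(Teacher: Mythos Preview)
Your proposal is correct and matches the paper's approach: the lemma is not proved from scratch but is a direct instantiation of the non-asymptotic first-order Edgeworth bound of \cite{derumigny2021explicit}, with the appendix supplying the explicit form of $r_{1,m}$ (the integrals $I_{3,1},I_{3,2},I_{3,3}$ built from Prawitz's smoothing kernel $\Psi$) rather than rederiving the constants. One small correction: you flag the large-$|t|$ characteristic-function estimate as requiring an extra regularity/non-lattice condition, but in this version of the bound no such hypothesis is imposed---the price is precisely the $O(1/\sqrt m)$ leading term $0.1995\,\widetilde K_{3,m}/\sqrt m$, and it is only the sharper $O(1/m)$ variant (for i.i.d.\ summands with absolutely continuous density, stated separately in the appendix) that needs the additional control of $\int |f_{S_m}(t)|/t\,dt$ over the tail region.
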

Note that this lemma deals with the first-order Edgeworth approximation which can be generalized to the higher-order Edgeworth approximations. We defer the analysis of those higher-order approximations to future work. The expression of $r_{1,m}$ only involves real integration with known constants which can be numerically evaluated in constant time.
\begin{remark}\label{remark:4.4}
The precision of the EEAI is inherently linked to the finite-sample bound rate of the Edgeworth expansion, with any improvements in higher-order Edgeworth expansion bounds directly applicable to our EEAI by substituting the corresponding $\Delta_{m, k, X}(\epsilon)$. While we primarily demonstrate the application for $k=1$ using the first-order expansion, it is noteworthy that Lemma \ref{lem:first_order_edgeworth_bound} yields a bound of order $O(1/\sqrt{m})$ due to our consideration of general independent, but not necessarily identical, random variables. In contrast, we show in Appendix \ref{appen:details_edgeworth} how an improved $O(1/m)$ rate can be achieved in the i.i.d.\ case. Notably, the $O(1/m)$ bound in Theorem 8 of \cite{dong2019gaussian} is limited to homogeneous composition of pure DP ($\delta = 0$). Our current first-order bound, leveraging Theorem 2 from \cite{derumigny2021explicit}, is particularly effective for large $m$. However, the potential for bounds utilizing higher-order Edgeworth expansions to enhance precision across all $m$ values remains an open avenue for future research.


\end{remark}

\subsubsection{Adaptive exponential decaying bound for NoisySGD}\label{sec:non_uniform_bound_GM}
One specific concern of the bound derived in the previous section is that it is uniform in $\epsilon$. Note that in \eqref{eq:simplify_eps_delta}, there is an amplification factor of error by $e^\epsilon$ in front of $F_{X,m}$. Therefore, as long as $\epsilon$ grows in $m$ with order at least $\epsilon = \Omega(\log{m})$, the error term in \eqref{eq:simplify_eps_delta} scales with order $e^{\Omega({\log m})}/O(m) = \Omega(1)$. 

In this section, we study the compositions of subsampled Gaussian mechanism (including NoisySGD and many federated learning algorithms), where we are able to improve the previous bound when $\epsilon$ is large. Informally, omitting the dependence on $m$, we want to have a bound of the form $|F_{X,m}(\epsilon) - G_{m, k, X}(\epsilon)| = O(e^{-\epsilon^2})$ to offset the effect of $e^\epsilon$ in front of $F_{X,m}$. To this end, we first prove that the tail bound of $F_{X, m}(\epsilon)$ is of order $O(e^{-\epsilon^2})$, with exact constant. Combining with  the tail behavior of the Edgeworth expansion, we conclude that the difference has the desired convergence rate. Following the discussion in \Cref{sec:PLLRs}, we need to calculate the bounds for two sequences of PLLRs separately. Here we focus on the sequence of PLLRs corresponding to $\left(p G_{1/\sigma}+(1-p) \mathrm{Id}\right)^{\otimes m}$. These PLLRs are given by $X_i = \log(1-p+pe^{\mu\xi_i-\frac{1}{2}\mu^2})$, where $\xi_i\sim N(0, 1)$.  
The following theorem characterizes the tail behavior of $F_{X,m}$. The tail bound of the sum of the other sequence of PLLRs corresponding to $((p G_{1/\sigma}+(1-p) \mathrm{Id})^{-1})^{\otimes m}$ has the same rate, and can be proved similarly.

\begin{theorem}
\label{thm:sub_gaussian_bound}
There exist some positive constant $a$, and some associated constant $\eta(a)> 0$, 
such that the tail of $F_{X,m}$ can be bounded as
$
    1 - F_{X, m}(\epsilon) = \mathbb{P}\left(\sum_{i = 1}^m X_i \ge \epsilon
    \right) \le 2\exp\left(-\frac{(\epsilon + m \eta)^2}{8m \tau^2}
    \right),
$
where $\tau^2 = \max\left\{\frac{(\log(1-p+pe^{\mu a-\frac{1}{2}\mu^2})+\mu(a^+-a)-\log(1-p))^2}{4},\right.$ $\left.\mu^2, \frac{(a^+ - a)^2\mu^2}{2\log(\Phi(a^+) - \Phi(a))}\right\}$ and $a^+ = \frac{\phi(a)}{1 - \Phi(a)}$. 
\end{theorem}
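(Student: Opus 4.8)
The plan is to view $\sum_{i=1}^m X_i$ as a sum of i.i.d.\ variables $X_i=g(\xi_i)$, where $g(x):=\log\!\bigl(1-p+pe^{\mu x-\mu^2/2}\bigr)$ and $\xi_i\sim\mathcal{N}(0,1)$, and to prove a sub-Gaussian–type tail bound by a Chernoff argument resting on two structural facts about $g$. First, $g$ is smooth and strictly increasing with $g(-\infty)=\log(1-p)$ and $g(x)=\mu x-\tfrac12\mu^2+\log p+o(1)$ as $x\to+\infty$; its derivative $g'(x)=\mu\,pe^{\mu x-\mu^2/2}/(1-p+pe^{\mu x-\mu^2/2})$ increases monotonically from $0$ to $\mu$, so $g$ is $\mu$-Lipschitz, each $X_i$ is bounded below by $\log(1-p)$, and $g(x)\le g(a)+\mu(x-a)$ for every threshold $a$ (used only for $x>a$). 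Second, since $X_i$ is the log-likelihood ratio of $Q=(1-p)\mathcal{N}(0,1)+p\mathcal{N}(\mu,1)$ against $P=\mathcal{N}(0,1)$ evaluated at $\xi_i\sim P$, we have $\mathbb{E}[X_i]=-\mathrm{KL}(P\|Q)<0$; the role of the free parameter $a$ is to extract an \emph{explicit, computable} lower bound $\eta(a)$ on this negative drift (the $a$-parametrized MGF analysis below produces such a bound, which is positive only for $a$ in a suitable range).

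The core step is a single-coordinate analysis based on the decomposition $X_i=\widetilde X_i+R_i$, where $\widetilde X_i$ truncates the tail at its conditional-mean bound:
\begin{equation*}
  \widetilde X_i := g(\xi_i)\,\mathbf{1}_{\{\xi_i\le a\}}+\bigl(g(a)+\mu(a^+-a)\bigr)\mathbf{1}_{\{\xi_i>a\}},\qquad R_i:=X_i-\widetilde X_i,
\end{equation*}
with $a^+=\phi(a)/(1-\Phi(a))=\mathbb{E}[\xi_i\mid\xi_i>a]$. By construction $\widetilde X_i$ takes values in the bounded interval $[\log(1-p),\,g(a)+\mu(a^+-a)]$, so Hoeffding's lemma yields $\mathbb{E}[e^{s(\widetilde X_i-\mathbb{E}\widetilde X_i)}]\le e^{s^2\tau_1^2/2}$ with $\tau_1^2=(g(a)+\mu(a^+-a)-\log(1-p))^2/4$, the first candidate in $\tau^2$. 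For the remainder, $R_i$ vanishes on $\{\xi_i\le a\}$ and satisfies $R_i\le\mu(\xi_i-a^+)$ on $\{\xi_i>a\}$ by the Lipschitz majorant, while $\mathbb{E}[R_i]\le 0$ since $\mathbb{E}[g(\xi_i)\mid\xi_i>a]\le g(a)+\mu(a^+-a)$; the MGF of $R_i$ is then controlled by the closed-form truncated-Gaussian quantity $\Phi(a)+e^{-s\mu a^+}e^{s^2\mu^2/2}\,\Phi(s\mu-a)$, which we bound by an exponential-quadratic $e^{s^2\tau_2^2/2}$ whose proxy $\tau_2^2$ produces the remaining ($\mu^2$ and $a^+$-dependent) candidates in $\tau^2$. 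Combining, both $\sum_i(\widetilde X_i-\mathbb{E}\widetilde X_i)$ and $\sum_i(R_i-\mathbb{E}R_i)$ obey Chernoff tail bounds of the form $\exp(-t^2/(2m\tau^2))$; since $\sum_i\mathbb{E}[\widetilde X_i+R_i]=\sum_i\mathbb{E}[X_i]\le-m\eta(a)$, a union bound that splits the budget $\epsilon+m\eta$ equally between the two sums gives
\begin{equation*}
  \mathbb{P}\Bigl(\sum_{i=1}^m X_i\ge\epsilon\Bigr)\le 2\exp\Bigl(-\frac{(\epsilon+m\eta)^2}{8m\tau^2}\Bigr),
\end{equation*}
which is the asserted inequality; the split into two pieces with equal budget allocation is precisely what produces both the leading factor $2$ and the constant $8$ (in place of the $2$ that a one-shot Chernoff bound would give).

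\textbf{Main obstacle.} The delicate part is the remainder $R_i$: the $X_i$ have a genuinely linear upper tail and are not sub-Gaussian with any proxy below $\mu^2$, so one must bound the truncated-Gaussian MGF $\Phi(a)+e^{-s\mu a^++s^2\mu^2/2}\Phi(s\mu-a)$ by a single clean exponential-quadratic, keeping every constant explicit in $a,\mu,p$ and the Mills-ratio quantity $a^+$, and simultaneously arranging that the aggregate first-order term $\sum_i\mathbb{E}[X_i]$ yields a strictly \emph{negative} drift $-m\eta(a)$; this negativity is exactly what constrains the admissible threshold $a$ and is why the statement only asserts the \emph{existence} of a suitable $a$. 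A lesser point is that the truncated-Gaussian MGF bound is favorable only for $s$ in a bounded range, so the Chernoff optimization is constrained and the resulting estimate is recorded for the corresponding range of $\epsilon$ — outside of which $2\exp(-(\epsilon+m\eta)^2/(8m\tau^2))\ge1$ and the claim is vacuous anyway. The companion PLLR sequence corresponding to $\bigl((pG_{1/\sigma}+(1-p)\mathrm{Id})^{-1}\bigr)^{\otimes m}$ is handled by the same decomposition.
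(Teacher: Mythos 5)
Your construction is in essence the paper's: same threshold $a$, same Mills-ratio quantity $a^+ = \phi(a)/(1-\Phi(a))$, same split of each $X_i$ into a bounded piece (controlled by Hoeffding, giving the first candidate in $\tau^2$) and a truncated-Gaussian tail piece (controlled by the Lipschitz linearization $g(\xi)\le g(a)+\mu(\xi-a)$ and a sub-Gaussian bound for the truncated normal, giving the $\mu^2$ and $a^+$-dependent candidates), followed by the equal-budget union bound that produces the factor $2$ and the $8$. The one place where your bookkeeping differs from the paper's — and where it slips — is the centering of the tail term. The paper dominates $X_i \le A_i + B_i$ with $B_i$ constructed to be exactly mean-zero (that is precisely the purpose of taking $a^+ = \E[\xi\mid \xi>a]$), and then sets $\eta = -\E[A_i + B_i] = -\E[A_i]$; no centering slack arises. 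You instead write an exact decomposition $X_i = \widetilde X_i + R_i$ with $\E R_i \le 0$ and then center both pieces at their own means, claiming $\eta = -\E[X_i]$. But your MGF control is a bound on $\E[e^{sR_i}]$, not on $\E[e^{s(R_i - \E R_i)}]$; since $\E R_i \le 0$, the centered MGF picks up the extra factor $e^{-s\E R_i}\ge 1$, which eats into the budget by $m|\E R_i|$ per composition and leaves a smaller admissible $\eta$ than you assert. The clean repair — which recovers exactly the paper's constants — is to leave $R_i$ uncentered: since $\E[e^{sR_i}]\le e^{s^2\tau_2^2/2}$ already implies $\P(\sum R_i \ge t) \le e^{-t^2/(2m\tau_2^2)}$ directly, setting $\epsilon_1 = m\E[\widetilde X_i] + (\epsilon+m\eta)/2$ and $\epsilon_2 = (\epsilon+m\eta)/2$ with $\eta = -\E[\widetilde X_i]$ makes the union-bound budget add to $\epsilon$ and reproduces the theorem verbatim. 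Everything else in your outline (the monotonicity and boundedness of $g$, the Lipschitz majorant, the closed-form truncated-Gaussian MGF, the choice of $a$ large enough to make the drift negative) matches the paper's argument.
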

The proof of Theorem \ref{thm:sub_gaussian_bound} is deferred to Appendix \ref{app:proof_sec_4} along with its dependent technical lemmas.  From the above theorem, we know that the tail of $F_{X,m}(\epsilon)$  is $O(e^{-\max\{\epsilon^2 /m, m\}}) = o(e^{-\epsilon})$, as long as $\epsilon = o({m})$. Note that in this case, the tail of the rescaled Edgeworth expansion is  of the same order $O(e^{-\max\{\epsilon^2 /m, m\}}) = o(e^{-\epsilon})$. Therefore, we can give a finite-sample bound  of the same rate for the difference between $F_{X,m}(\epsilon)$ and its approximation $G_{m, k, X}$ at large $\epsilon$. 
Note that this finite-sample bound scales better than uniform bound in Lemma \ref{lem:first_order_edgeworth_bound} when $m$ and $\epsilon$ are large. 

\subsection{Implementation of Edgeworth Accountant}\label{app:algo_EA}


To describe how to implement the Edgeworth Accountant, it is necessary to extend the definitions of the AEA and EEAI from specific trade-off functions $f^{(\alpha)}$ to general trade-off functions of the form $\left(\inf_\alpha f^{(\alpha)}\right)^{**}$.

\begin{definition}[AEA for general trade-off function]\label{def:AEA_full}
The $k$-th order AEA of $\left(\inf_\alpha f^{(\alpha)}\right)^{**}$-DP that defines $\delta(\epsilon)$ for $\epsilon>0$ is given by $\delta(\cdot) = \sup_\alpha \delta^{(\alpha)}(\cdot)$, where
\begin{equation}\label{eq:AEA_full}
    \delta^{(\alpha)}(\epsilon) = 1-G_{m, k, Y^{(\alpha)}}(\epsilon)-e^{\epsilon}(1-G_{m, k, X^{(\alpha)}}(\epsilon)),
\end{equation}    
for any $\alpha$.
\end{definition}

\begin{definition}[EEAI for general trade-off function]\label{def:EEAI_full}
The $k$-th order EEAI of $\left(\inf_\alpha f^{(\alpha)}\right)^{**}$-DP associated with privacy parameter $\delta(\epsilon)$ for $\epsilon>0$ is given by $[\delta^{-},\delta^{+}]$, where
$\delta^{-}(\cdot) = \sup_\alpha \delta^{(\alpha)-}(\cdot)$, $\delta^{+}(\cdot) = \sup_\alpha \delta^{(\alpha)+}(\cdot)$, and 
\begin{equation}
\label{eq:EEAI_full}
\begin{aligned}
    & \delta^{(\alpha)-}(\epsilon) \equiv 1-G_{m, k, Y^{(\alpha)}}(\epsilon)-\Delta_{m, k, Y^{(\alpha)}}(\epsilon)-e^{\epsilon}(1-G_{m, k,X^{(\alpha)}}(\epsilon)+\Delta_{m, k, X^{(\alpha)}}(\epsilon)),\\
    & \delta^{(\alpha)+}(\epsilon) \equiv 1-G_{m, k, Y^{(\alpha)}}(\epsilon)+\Delta_{m, k, Y^{(\alpha)}}(\epsilon)-e^{\epsilon}(1-G_{m, k, X^{(\alpha)}}(\epsilon)-\Delta_{m, k, X^{(\alpha)}}(\epsilon)).
\end{aligned}
\end{equation}
\end{definition}

For completeness, we also provide the formal definition of general $k$-th order Edgeworth Expansion $E_{m,k, X}$. More details can be found, for example, in \cite{hall2013bootstrap}.
\begin{definition}[Definition of $k$-th order Edgeworth Expansion]
\label{def:k-th_edgeworth}
For any sequence of $m$ distributions $X_1, ..., X_m$, let $S_m = \frac{\sum_{i=1}^m X_i -\sum_{i=1}^m \mathbb{E} X_i}{\bar{B}_m}$ be the standardized sum, where $\bar{B}_m = \sqrt{\Var{\sum_{i=1}^m X_i}}$. Define $\lambda_r = \frac{1}{m} \frac{\sum_{j=1}^m \kappa_{r, j}}{\bar{B}_m^r}$, where $\kappa_{r, j}$ is the $r$-th cumulant of $X_j$. Assume $X_i$'s have $(k+2)$-th cumulant. We define the $k$-th Edgeworth expansion $E_{m,k, X}$ as 
\begin{equation}
    E_{m, k, X}(x) = \Phi(x) + \sum_{r = 1}^{k} \frac{1}{n^{r/2}} \frac{P_r(-D)}{D} \phi(x),
\end{equation}
where $D$ is the differential operator, and $P_r(-D)$ is a polynomial of degree $3r$. The explicit form of $P_r(-D)$ can be written as
\begin{equation*}
    P_r(-D) = \sum \left(\prod_i \frac{1}{k_i !}\left(\frac{\lambda_{i+2}}{(i + 2)!}\right)^{k_i} (-D)^{k_i(i + 2)}\right),
\end{equation*}
where the summation is over all the integer partitions of $m$ such that $\sum_i ik_i = m$.
\end{definition}

With the definitions above in place, we now present the detailed implementations of both the AEA and EEAI.
\begin{center}
\resizebox{0.99\textwidth}{!}{
\begin{minipage}{\textwidth}
\begin{algorithm}[H]
	\caption{AEA}\label{alg:AEA}

\begin{algorithmic}
\STATE \textbf{Parameters:} $m$ general mechanisms $M_1,..., M_m$, $\epsilon\ge 0$, and order $k\ge 1$.
		\FOR{$i = 1, \ldots, m$}
		\STATE  \hspace{0.5cm}{Analytically encode all the corresponding PLLRs for $M_i$, $\{(X_i^{(\alpha)}, Y_i^{(\alpha)})\}_\alpha$ for all $\alpha$.}
		\STATE  \hspace{0.5cm}{Numerically calculate the cumulants up to order $k + 2$ for $X_i^{(\alpha)}$ and $Y_i^{(\alpha)}$ for all $\alpha$.}		
		\ENDFOR
		\STATE {Calculate $G_{m, k, X^{(\alpha)}}(\epsilon)$ and $G_{m, k, Y^{(\alpha)}}(\epsilon)$ for each $\alpha$ using $k$-th order Edgeworth expansion.}
		\STATE {Calculate $\delta^{(\alpha)}(\epsilon)$ for each $\alpha$ by \eqref{eq:AEA_full}.}
		\STATE {\bf Output} $\sup_\alpha \delta^{(\alpha)}(\epsilon).$
	\end{algorithmic}
\end{algorithm}
\end{minipage}
}
\end{center}

And similarly, we present the algorithm for the general EEAI.

\begin{center}
\resizebox{0.99\textwidth}{!}{
\begin{minipage}{\textwidth}
\begin{algorithm}[H]
	\caption{EEAI}\label{alg:EEAI}
\begin{algorithmic}[0]
\STATE  \textbf{Parameters:} $m$ general mechanisms $M_1,..., M_m$, $\epsilon\ge 0$, and fixed order $k = 1$.
		\FOR{$i = 1, \ldots, m$}
		\STATE \hspace{0.5cm}{Analytically encode all the corresponding PLLRs for $M_i$, $\{(X_i^{(\alpha)}, Y_i^{(\alpha)})\}_\alpha$ for all $\alpha$.}
		\STATE  \hspace{0.5cm}{Numerically calculate the cumulants up to order $4$ for $X_i^{(\alpha)},$ and $Y_i^{(\alpha)}$ for all $\alpha$.}		
		\ENDFOR
		\STATE {Calculate $G_{m, 1, X^{(\alpha)}}(\epsilon)$ and $G_{m, 1, Y^{(\alpha)}}(\epsilon)$ for each $\alpha$ using first order Edgeworth expansion.}
		\STATE {Calculate $\Delta_{m, 1, X^{(\alpha)}}(\epsilon)$ and $\Delta_{m, 1, Y^{(\alpha)}}(\epsilon)$ for each $\alpha$ using Lemma \ref{lem:first_order_edgeworth_bound} or \Cref{thm:sub_gaussian_bound}.}
		\STATE {Calculate $\delta^{(\alpha)+}(\epsilon)$ and $\delta^{(\alpha)-}(\epsilon)$ for each $\alpha$ by \eqref{eq:EEAI_full}.}
		\STATE {\bf Output} $[\sup_\alpha \delta^{(\alpha)-}(\epsilon), \sup_\alpha \delta^{(\alpha)+}(\epsilon)].$
	\end{algorithmic}
\end{algorithm}
\end{minipage}
}
\end{center}

Note that \Cref{alg:AEA} and \Cref{alg:EEAI} seek to find an estimate of $\epsilon$ and bounds on $\delta$ for any given $\epsilon$, respectively. And both algorithms run in constant/linear time for $m$ identical/general compositions. In practice, people often would like to find an estimate or bounds on $\epsilon$ given an $\delta$. To get such an estimate of $\epsilon$ given $\delta$, we can directly inverse the \Cref{alg:AEA}\footnote{Note that if we substitute Edgeworth approximation with the true CDF of PLLRs , it is direct to  show (by taking derivative) that $\delta^{(\alpha)}(\epsilon)$ is always a decreasing function of $\epsilon$, and the supremum  of decreasing functions is still a decreasing function. Therefore, we can always take inversion.}. And to get upper and lower bounds of $\epsilon$ given $\delta$, we can use the inversion method discussed in \Cref{sec:eps_bound}, and specifically, the equations in \eqref{eq:Edgeworth_upper_lower_bound}.

We also supplement these algorithms with a detailed time complexity analysis, which is presented in Appendix \ref{app:error_analysis}.

\subsection{Extension to other mechanisms}\label{sec:non_Gaussian_non_uniform_bound}

Note that our analysis framework is applicable to a wide range of common noise-adding mechanisms. Specifically, Lemma \ref{lem:first_order_edgeworth_bound} only requires the distribution of PLLRs to have bounded fourth moments. And for many common mechanisms, a counterpart of \Cref{thm:sub_gaussian_bound} can be proved similarly. We now demonstrate how to generalize our analysis to the Laplace Mechanism. 

\noindent\textbf{The Laplace mechanism.}
It is straightforward to verify that the trade-off function for subsampled Laplace Mechanisms is given by $\min\{(p L_{\mu} + (1-p)\mathrm{Id})^{\otimes m}, ((p L_{\mu} + (1-p)\mathrm{Id})^{-1})^{\otimes m}\}^{**}$, where $L_{\mu} = T(\text{Lap}(0, 1), \text{Lap}(\mu, 1))$.
 The two associated sequences of PLLRs $X_i$ and $Y_i$ can be expressed as:
$
    X_i^{(1)} \equiv \log \left(1-p+p \mathrm{e}^{|\xi| - |\xi - \mu|}\right)
,Y_i^{(1)} \equiv \log \left(1-p+p \mathrm{e}^{|\zeta| - |\zeta - \mu|}\right),
$ and 
$
    X_i^{(2)} \equiv -\log (1-p+p \mathrm{e}^{|\zeta| - |\zeta - \mu|})
$, $Y_i^{(2)} \equiv -\log \left(1-p+p \mathrm{e}^{|\xi| - |\xi - \mu|}\right),
$
where $\xi \sim \text{Lap}(0,1), \zeta \sim p\text{Lap}(\mu, 1) + (1-p)\text{Lap}(0,1)$.
Note that all the PLLRs are bounded and thus sub-Gaussian. This implies that we can apply Lemma \ref{lem:first_order_edgeworth_bound} directly and also bound the tail similar to Theorem \ref{thm:sub_gaussian_bound}.

\begin{proposition}
Let $\eta  = -\max\{\E(X_i^{(1)}), \E(X_i^{(2)})\} > 0$. 
The tail of the sum of both sequences of PLLRs under the Laplace Mechanism has the following inverse exponential behavior, 
$    \max\left\{ \mathbb{P}\left(\sum_{i = 1}^m X_i^{(1)} \ge \epsilon
    \right), \mathbb{P}\left(\sum_{i = 1}^m X_i^{(2)} \ge \epsilon
    \right) \right\}\le \exp\left(-\frac{2(\epsilon + m \eta)^2}{m \tau^2}
    \right),
$
where $\tau^2 = (\log(1-p+pe^\mu)-\log(1-p+pe^{-\mu}))^2.$
\end{proposition}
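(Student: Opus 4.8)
The plan is to recognize the claimed bound as essentially a one-line consequence of Hoeffding's inequality for i.i.d.\ bounded summands, once two elementary facts are in place: each PLLR is almost surely confined to an interval of width exactly $\tau := \bigl|\log(1-p+pe^{\mu}) - \log(1-p+pe^{-\mu})\bigr|$, and each PLLR has strictly negative mean. For the first fact, note that for $\mu>0$ the map $t\mapsto |t|-|t-\mu|$ equals $-\mu$ for $t\le 0$, interpolates linearly up to $\mu$ on $[0,\mu]$, and stays at $\mu$ for $t\ge\mu$, so its range is exactly $[-\mu,\mu]$; the same holds with $\zeta$ in place of $\xi$, and the case $\mu<0$ follows by symmetry. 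Since $w\mapsto\log(1-p+pe^{w})$ is strictly increasing, $X_i^{(1)}=\log(1-p+pe^{|\xi_i|-|\xi_i-\mu|})$ ranges over the interval between $\log(1-p+pe^{-\mu})$ and $\log(1-p+pe^{\mu})$, whose length is $\tau$; negation shows $X_i^{(2)}$ ranges over an interval of the same length $\tau$. Hence the summands within each of the two sequences are i.i.d.\ and supported in a common interval of length $\tau$, so the quantity $\sum_i(b_i-a_i)^2$ appearing in Hoeffding's inequality equals $m\tau^2$.

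For the mean, I would invoke the defining property of PLLRs: $1-p+pe^{|\xi|-|\xi-\mu|}$ is exactly the likelihood ratio $\tfrac{dP_1}{dP_0}(\xi)$ of the subsampled Laplace pair $P_0=\mathrm{Lap}(0,1)$, $P_1=p\,\mathrm{Lap}(\mu,1)+(1-p)\mathrm{Lap}(0,1)$ (the factors of $1/2$ in the densities cancel), so $X_i^{(1)}=\log\tfrac{dP_1}{dP_0}(\xi_i)$ with $\xi_i\sim P_0$, and likewise $X_i^{(2)}=\log\tfrac{dP_0}{dP_1}(\zeta_i)$ with $\zeta_i\sim P_1$. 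In both cases $\E\bigl[e^{X_i^{(j)}}\bigr]=1$, so Jensen's inequality gives $\E X_i^{(j)}\le 0$, with strict inequality since the ratio is not almost surely equal to $1$ (its range has positive length because $\mu\neq0$); equivalently $\E X_i^{(1)}=-D_{\mathrm{KL}}(P_0\|P_1)<0$ and $\E X_i^{(2)}=-D_{\mathrm{KL}}(P_1\|P_0)<0$. Therefore $\eta=-\max\{\E X_i^{(1)},\E X_i^{(2)}\}=\min\{D_{\mathrm{KL}}(P_0\|P_1),D_{\mathrm{KL}}(P_1\|P_0)\}>0$, which is the positivity claimed in the statement.

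Finally I would assemble the pieces. Centering the first sum, $\P\bigl(\sum_{i}X_i^{(1)}\ge\epsilon\bigr)=\P\bigl(\sum_i(X_i^{(1)}-\E X_i^{(1)})\ge \epsilon-m\,\E X_i^{(1)}\bigr)$, and since $\epsilon-m\,\E X_i^{(1)}\ge \epsilon+m\eta>0$, Hoeffding's inequality together with the monotonicity of $t\mapsto e^{-2t^2/(m\tau^2)}$ on $t\ge0$ gives $\P\bigl(\sum_i X_i^{(1)}\ge\epsilon\bigr)\le \exp\bigl(-2(\epsilon+m\eta)^2/(m\tau^2)\bigr)$; the identical argument for $X_i^{(2)}$ (same interval width $\tau$, and $-m\,\E X_i^{(2)}\ge m\eta$) yields the same bound, so taking the maximum over the two tail probabilities proves the proposition. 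I do not anticipate a genuine obstacle here: the only two points that require care are extracting the \emph{exact} width of the PLLR range (which is what produces the constant $\tau^2$ rather than a cruder Lipschitz bound such as $4\mu^2$) and justifying the strict negativity of the means, which is precisely the Laplace counterpart of the $\eta>0$ input used in Theorem~\ref{thm:sub_gaussian_bound}; both are routine.
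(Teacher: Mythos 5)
Your proof is correct and follows exactly the approach the paper sketches: the paper only remarks that the Laplace PLLRs are bounded and "thus sub-Gaussian," and your argument fills this in by computing the exact range width $\tau$ (via the range $[-\mu,\mu]$ of $t\mapsto|t|-|t-\mu|$ and the monotonicity of $w\mapsto\log(1-p+pe^w)$), noting $\E[e^{X_i^{(j)}}]=1$ hence $\E X_i^{(j)}<0$ by Jensen, and then invoking Hoeffding's inequality together with the monotonicity of the tail bound to replace $\epsilon-m\,\E X_i^{(j)}$ by the smaller quantity $\epsilon+m\eta$. Because the PLLRs are already bounded, the surrogate $A_i+B_i$ decomposition used for the Gaussian case in Theorem~\ref{thm:sub_gaussian_bound} is unnecessary, which is also why the exponent carries the clean Hoeffding constant $2/(m\tau^2)$ rather than the $1/(8m\tau^2)$ arising from the union bound there.
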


\section{Numerical Experiments}
\label{sec:experiments}
In this section, we illustrate the advantages of Edgeworth Accountant by presenting the plots of DP accountant curves under different settings. Specifically, we plot the privacy curve of $\epsilon$ against the number of algorithms under composition and compare our methods (AEA and EEAI) with existing DP accountants. We provide the implementation of our Edgeworth Accountant in \Cref{app:algo_EA}.

\begin{figure*}[h!]
    \centering
    \includegraphics[width = 0.325\linewidth]{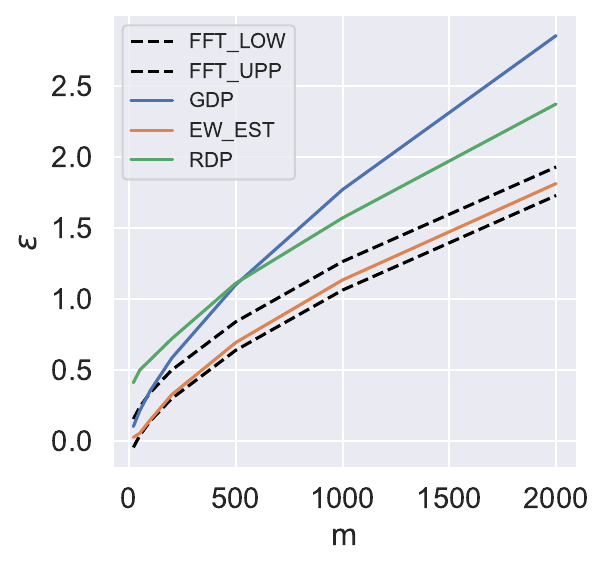}
    \includegraphics[width = 0.325\linewidth]{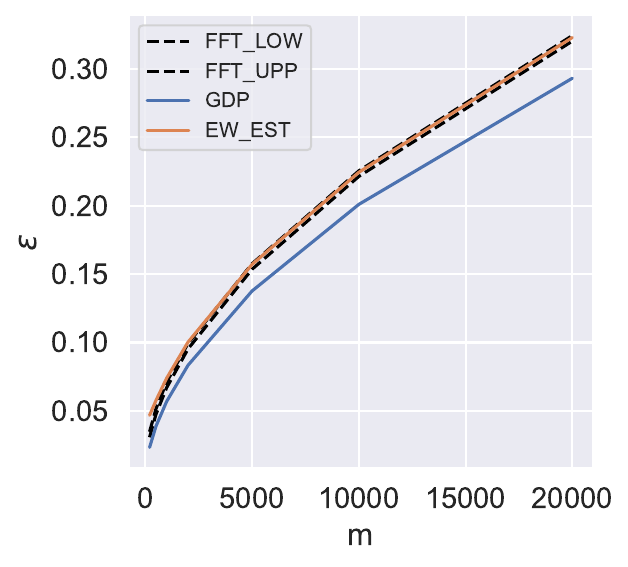}
    \includegraphics[width = 0.325\linewidth]{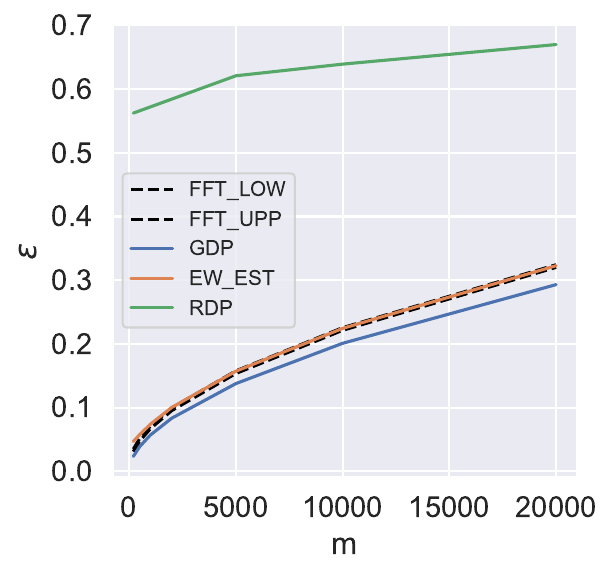}
    \vspace{-0.6em}
    \caption{
    The privacy curve computed via several different accountants. Left: The setting in Figure 1(b) in~\cite{gopi2021numerical}, where $p = 0.01$, $\sigma = 0.8$, and $\delta = 0.015$. Middle and Right: The setting of a real application task in federated learning for $10$ epochs, with $p = 0.05$, $\sigma = 1$, and $\delta = 10^{-5}$. Here, ``EW\_EST'' is the estimate given by our approximate Edgeworth accountant, which provides asymptotically accurate estimates in contrast to exact finite-sample methods such as FFT-based approaches. The RDP curve is omitted from the right panel to facilitate clearer comparison among remaining methods.}
    \vspace{-1em}\label{fig:comparion_to_DGP_RDP}
\end{figure*}

\begin{figure*}[h]
    \centering
    \includegraphics[width = 0.325\linewidth]{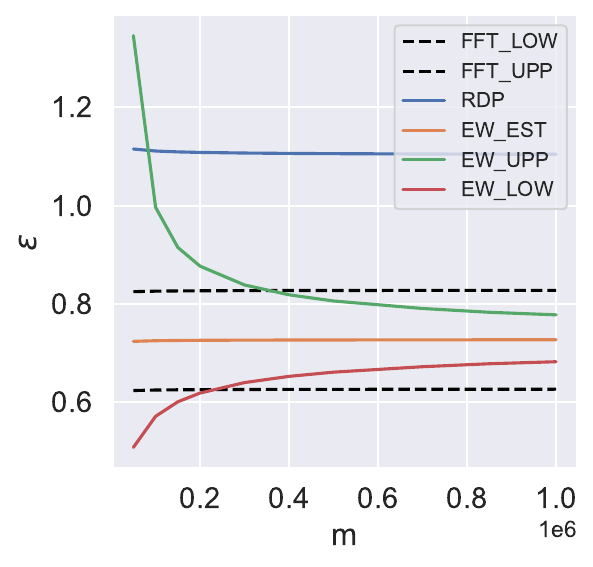}
    \includegraphics[width = 0.325\linewidth]{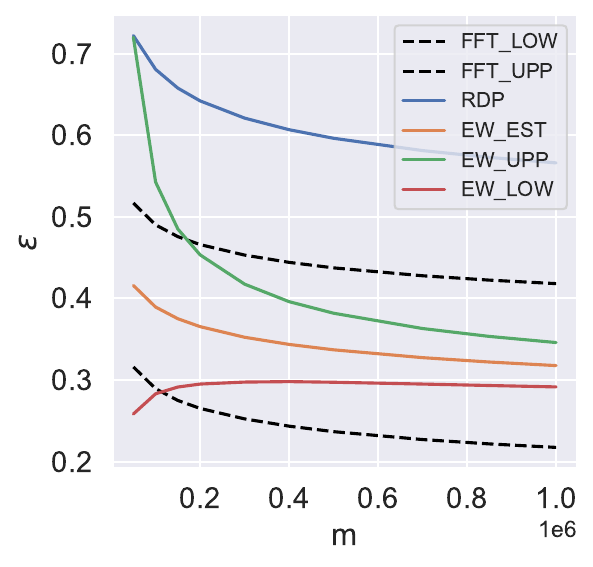}
    \includegraphics[width = 0.325\linewidth]{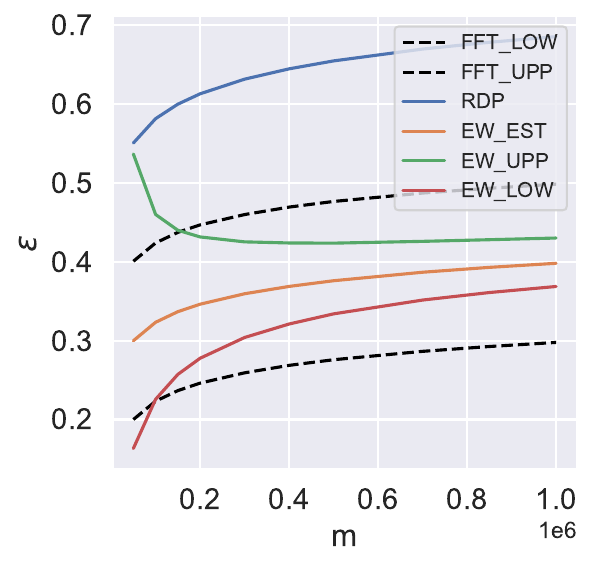}
    \vspace{-0.6em}
    \caption{
    We demonstrate the comparisons between our Edgeworth accountant (both AEA and EEAI), the RDP accountant, and the FFT accountant (whose precision of $\epsilon$ is set to be $0.1$). The three settings are set so that the privacy guarantees does not change dramatically as $m$ increases. Specifically, in all three settings, we set $\delta = 0.1$, $\sigma = 0.8$, and $p = 0.4 / \sqrt{m}$ (left), $p = 1/\sqrt{m\log m}$ (middle), and $p = 0.1\sqrt{\log m/m}$ (right). 
    We omit the GDP curve, because the performance is close to the AEA (``EW\_EST'' curve) when $m$ is large. 
    }
    \label{fig:EEAIs}
\end{figure*}

\noindent\textbf{The AEA.}
We first demonstrate that our proposed approximate Edgeworth Accountant (AEA) is indeed very accurate, outperforming the existing R\'enyi DP and the CLT approximations in experiments. The first experiment has the same setting as in Figure 1(b) in~\cite{gopi2021numerical}, where the authors report that both RDP and GDP are inaccurate, whereas the second setting corresponds to a real federated learning task.
The results are shown in Figure \ref{fig:comparion_to_DGP_RDP}, where we describe the specific settings in the caption. For each sub-figure, the dotted lines ``FFT\_LOW'' and ``FFT\_UPP'' denote the lower and upper bound computed by FFT~\citep{gopi2021numerical} which are used as the underlying ground truth. The ``GDP'' curve is computed by the CLT approximation~\citep{bu2019deep}, the ``RDP'' curve is computed by moments accountant using R\'enyi DP with subsampling amplification~\citep{wang2019subsampled}, and the ``EW\_EST'' curve is computed by our (second-order) AEA. As is evident from the figures, our AEA outperforms GDP and RDP in terms of accuracy. 

\noindent\textbf{The EEAI.} We now present the empirical performance of the EEAI obtained in Section \ref{sec:exact_Edgeworth_Accountant}. We still experiment with NoisySGD. Details of the experiments are given in the caption of Figure \ref{fig:EEAIs}. 
The two error bounds of EEAI are represented by ``EW\_UPP'' and ``EW\_LOW'', and all other curves are defined the same as in the previous setting. 



\noindent\textbf{Numerical stability.} Beyond its superior time complexity, the Edgeworth Accountant demonstrates enhanced numerical stability compared to the FFT method based on numerics, particularly when the number of mechanisms under composition is large.\footnote{In all experiments, we utilized version 0.2.0 of the package from \cite{gopi2021numerical}.} The FFT method may encounter numerical issues in heterogeneous composition with a large value of $m$. These issues could either prevent the code from running or, if the code executes successfully, might result in incorrect privacy guarantees. 

An example of this is shown in Figure \ref{fig:numerical_instable}, where the FFT method, despite running successfully, produces inaccurate results. We examine two subsampled Gaussian mechanisms: $M_1$ with a noise multiplier $\sigma = 0.8$ and subsample rate $p_1$, and $M_2$ with a noise multiplier $\sigma = 0.8$ and subsample rate $p_2$. Here, we consider composing $m_1$ instances of $M_1$ with $m_2$ instances of $M_2$. Following the settings in Figure \ref{fig:EEAIs}, we maintain the same precision level of $\epsilon$ for the FFT method at 0.1 (noting that reducing the precision further exacerbates numerical issues). The privacy guarantees computed by both our method and the FFT method are depicted in Figure \ref{fig:numerical_instable}. The FFT method, requiring discrete convolution for each mechanism, becomes numerically unstable as $m$ increases. In contrast, our analytical finite-sample Edgeworth bounds maintain both accuracy and stability, demonstrating the qualitative superiority of Edgeworth-based methods in the large-$m$ regime. These findings yield important practical implications. For small values of $m$, convolution-based accountants such as the FFT method deliver tight and reliable privacy parameter bounds. However, as $m$ increases, the Edgeworth accountant family offers distinct advantages. The AEA provides highly accurate estimates for moderately large $m$ (several hundred compositions), as demonstrated in Figure \ref{fig:comparion_to_DGP_RDP}. For even larger values of $m$, the EEAI serves as an exact, accurate, and numerically stable accountant that maintains reliability across all composition scales.


\begin{figure}
\centering
\includegraphics[width=0.48\linewidth]{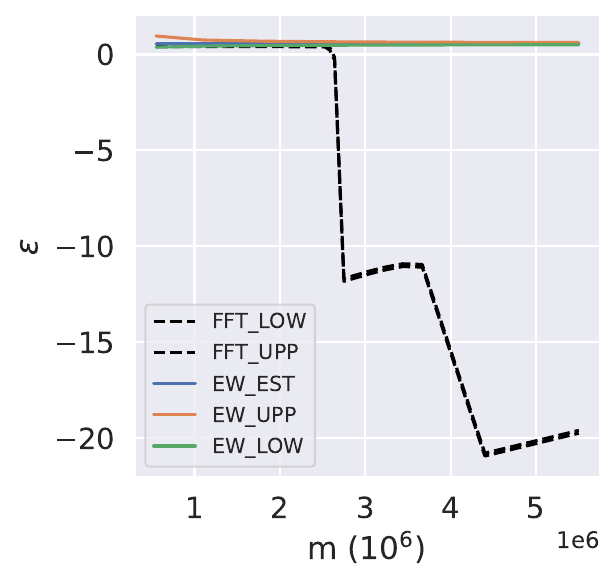}
\includegraphics[width=0.48\linewidth]{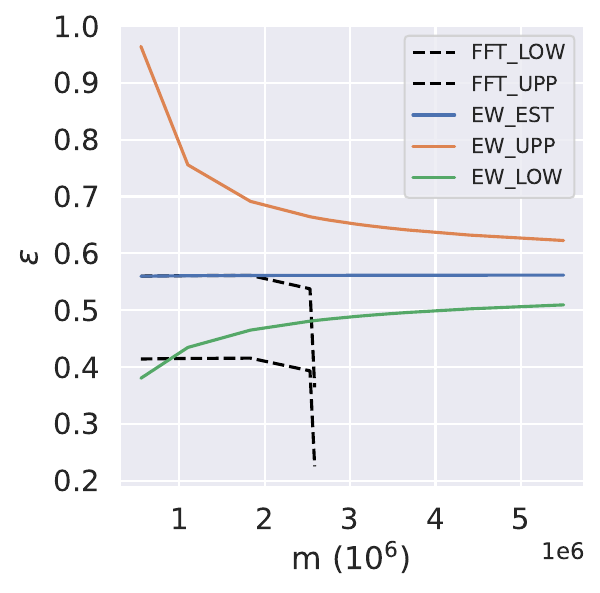}
\caption{With $\delta = 0.1$, $\sigma = 0.8$, $p_1 = 0.35/\sqrt{m_1}$, $p_2 = 0.02/\sqrt{m_2}$, $m_2 = 10m_1$, and $m = m_1 + m_2$, the FFT method exhibits numerical instability and fails to accurately compute $\epsilon$ for large $m$. The left subplot shows that as $m$ increases, the FFT method yields negative $\epsilon$ bounds. The right subplot, truncating all epsilon bounds to positive values, demonstrates that our EEAI provides stable bounds even for such large values of $m$. This numerical instability may stem from the FFT's polynomial dependence on $m$, making numerical stability a significant concern as $m$ becomes very large. In these cases, our EEAI remains numerically stable and effective, even when $m$ is very large.}
\label{fig:numerical_instable}
\end{figure}

\section{Discussion}
\label{sec:discussion}

In this paper, we have developed an analytical approach for accounting privacy loss under the composition of mechanisms, which we name the Edgeworth Accountant. Our approach builds upon the $f$-DP framework, enabling a lossless tracking of privacy guarantees under both subsampling and composition. It is further grounded in an analysis of the probability distributions of PLLRs using Edgeworth expansions. Compared to existing privacy accounting algorithms, our method stands out for its computational efficiency, due to its analytical foundation, and for its numerical stability. We have demonstrated its superior numerical performance in applications related to private deep learning and federated analytics.

As we conclude this paper, we present several open questions for future research. Our experiments indicate that the approximate privacy loss estimated using the AEA is remarkably accurate, while the confidence bands on the true privacy loss offered by the EEAI appear to be conservative. Consequently, an intriguing research direction is the precise quantification of the error in the AEA estimate and the construction of tighter confidence bounds for privacy guarantees. From a practical standpoint, we observe that practitioners often need to run private deep learning models under various hyperparameter configurations and ultimately select one. In such scenarios, swiftly estimating the privacy guarantee for each experimental setup becomes crucial, and the FFT method can be reserved for the final chosen model. This presents a practical use case for our AEA estimator in conjunction with other accounting methods, both numerical and analytical \citep{zhu2021optimal}. An interesting research direction is to develop a principled approach that combines the strengths of our Edgeworth Accountant with other accounting algorithms. 

Furthermore, while our explicit finite-sample bounds are established for first-order expansions, extending these rigorous non-asymptotic guarantees to higher-order expansions is a significant direction. Finally, we emphasize that many algorithms and procedures inherently involve sequential decision-making and naturally lead to extensive mechanism composition. These problems present compelling opportunities for extending the Edgeworth Accountant's applicability beyond NoisySGD and federated analytics to broader classes of privacy-preserving algorithms.

\section*{Acknowledgments}

We are grateful to the anonymous associate editor and referees for their constructive comments, which improved the presentation of this paper. This work was supported in part by a Meta Faculty Research Award, NSF grant DMS-2310679, and Wharton AI for Business.

\bibliographystyle{plain}
\bibliography{ref}


\newpage
\appendix
\onecolumn
\section*{Appendix}

\section{Analysis of NoisySGD}\label{app:algo}
We present the algorithms we considered in \Cref{sec:motivating_applications}. To start with, suppose we have a neural network $h$ that is governed by weights $\w$, with samples $\x_i$ and labels $y_i$ ($i = 1,...,n$). The prediction for each example is $h(\x_i,\w)$, and the per-sample loss is given by $\ell(h(\x_i, \w),y_i)$ for some loss function $\ell$. We define the objective function $L$ to be the average of per-sample losses
\begin{align*}
    L(\w)=\frac{1}{n}\sum_{i=1}^n \ell(h(\x_i, \w),y_i).
\end{align*}
Stochastic Gradient Descent (SGD) algorithm uses a mini-batch as a proxy to this objective function. To control the potential privacy leak in each step of SGD, we need to clip the gradients to control the sensitivity, after which a Gaussian noise is added to it. The details of the algorithm is shown below. 

\begin{center}
\resizebox{0.99\textwidth}{!}{
\begin{minipage}{\textwidth}
\begin{algorithm}[H]
	\caption{NoisySGD (with local or global flat per-sample clipping)}\label{alg:dpsgd}

\begin{algorithmic}[0]
\STATE \textbf{Parameters:} initial weights $\w_0$, learning rate $\eta_t$, subsampling probability $p$, number of iterations $m$, noise scale $\sigma$, gradient norm bound $R$.
		\FOR{$t = 0, \ldots, m-1$}
		\STATE {\hspace{0.55cm}Subsample a batch $I_t\subseteq \{1, \ldots, n\}$ from training set with probability $p$}
		\FOR{$i\in I_t$}
		\STATE \hspace{1.1cm} {$v_t^{(i)} \gets \nabla_{\w} \ell(f(\x_i, \w_t), y_i)$}		
        \STATE \hspace{1.1cm} {$\bar{v}_t^{(i)} \gets \min\big\{1, R/\|v_t^{(i)}\|_2\big\} \cdot v_t^{(i)} $}
        \hfill Clip the gradient
		\ENDFOR
		\STATE \hspace{0.55cm} {$\bar{V}_t\gets\sum_{i\in I_t}\bar{v}_t^{(i)}$}
		\hfill Sum over batch
		\STATE \hspace{0.55cm} {$\w_{t+1} \gets \w_{t} - \frac{\eta_t}{|I_t|} \left(\bar{V}_t+\sigma R \cdot \mathcal{N}(0, I)\right)$}
		\hfill Apply Gaussian mechanism and descend
		\ENDFOR
	\STATE {\bf Output} $\w_{m}$
	\end{algorithmic}
\end{algorithm}
\end{minipage}
}
\end{center}

Recall that in \Cref{sec:eps_bound}, in order to transfer the bounds from CDF approximations to privacy parameters, we need to find a range that contains all possible roots of $\delta = g^+(\epsilon), \delta = g^-(\epsilon)$. Here we showcase how to find such bound in the case of NoisySGD. 
\begin{remark}\label{rmk:bound_on_g}
For NoisySGD, we can express such range analytically. Specifically, for any $\alpha\in \{1, 2\}$ (the index of the sequence of PLLRs), we focus on finding roots in the range $[0, C]$ for $\epsilon^{(\alpha)+}$ and $\epsilon^{(\alpha)-}$, where $C$ is the smallest value of $\epsilon$ such that 
\begin{equation*}
    C \ge \sup_{S\subset \mathbb{R}, \P(Y^{(\alpha)}\in S) \ge \delta} \log\left(\frac{\P(Y^{(\alpha)}\in S)}{\P(X^{(\alpha)}\in S)}\right).
\end{equation*}
This is clearly a (sub-optimal) upper bound. A loose bound of the range can be easily proved to be
\begin{equation*}
C = \min\left\{m \log\left(p \frac{\delta}{1-\Phi(z_\delta + \mu)}\right), \log\left(\frac{\delta}{1-\Phi\left(\frac{z_{{\delta}/{\sqrt{m}}} + \mu}{\sqrt{m}}\right)}\right)\right\},
\end{equation*}
where $z_\delta$ is the upper $\delta$ quantile of a standard normal distribution. We can find $0<\epsilon^{(\alpha)-} \le \epsilon^{(\alpha)+}$ in the range defined above.

\end{remark}

\section{Error Rate Analysis for Numerical Integration}\label{app:error_analysis}

In Table \ref{table:table1}, the Edgeworth Accountant has a computational complexity of $O(1)$ for homogeneous compositions and $O(m)$ for heterogeneous compositions in mathematical analysis. However, in practice, this only holds up to a logarithmic factor due to numerical errors from modern computer discretization. For example, storing the number $m$ alone requires $O(\log m)$ space and time complexity; and the simple addition of $m$ real numbers can incur an order $O(m)$ numerical error. To account for these numerical errors, the achievable time complexity for our homogeneous composition is $O(\log m) = \widetilde{O}(1)$, and heterogeneous composition is $O(m\log m) = \widetilde{O}(m)$, where $\widetilde{O}$ omits logarithmic terms of $m$. 

We now demonstrate the assumptions we need and how to calculate the time complexity when considering numerical errors for EEAI (Algorithm \ref{alg:EEAI}). The analysis for AEA is similar.
In practice, to implement Algorithm \ref{alg:EEAI}, we need to know $f_i$ for each algorithm $M_i$ as an oracle, this is a very mild assumption when we know the noise addition mechanism used in each composing algorithm. We further assume the evaluation of any finite moments of PLLRs corresponding to the $f_i$ is in constant time. This is the same model we used for comparing all the methods, including FFT. Specifically, this is almost obvious with an oracle of $f_i$: we can obtain the analytical form of the PLLRs as specified in Definition \ref{def:privacy_loss_loglikelihood}, and we approximate the DP guarantee with the corresponding closed-form expression as specified in the Algorithm \ref{alg:EEAI}. 
The only term that may potentially depend on $m$ here is the accumulated numerical error. With the order of $k = 1$, we only need to numerically calculate up to 4th moment of the PLLRs for a fixed number of $\alpha$'s (for subsampled mechanism, $|\mathcal{I}| = 2$, and non-subsampled mechanism, $|\mathcal{I}| = 1$). We then sum these numerically integrated results together to calculate $G_{m, k, X}$ and $\Delta_{m, k, X}$. When $m = 1$, in order to achieve a discretization error of $\tau$, there is a minimum number of partitions in the discretization required. Let this be $n_1$.  When $m > 1$, the worst case discretization error of the summation of the moment becomes $O(m \tau)$.  To achieve the same discretization error for $m > 1$ and $m= 1$, each integration needs to have a discretization error of $\tau’ = \frac{\tau}{m}$, which requires the partition to be of order $o(n_1 \log m)$. To see this, we can do integration using any standard  method like Gaussian quadrature: with $n$ partition in the discretization, the error is bounded by a rate of $O(n^{-n})$ \footnote{Here we start from the classic error estimate of the integral of $f$ on $[a, b]$ of form $\frac{(b - a)^{2n + 1}(n!)^4}{(2n + 1) (2n!)^3}f^{(2n)}(\eta)$, for some $\eta\in[a, b]$. We need the density of the PLLR to be smooth enough, i.e. the higher order derivative is not growing faster than $c^n$ for some constant $c > 1$. This condition on $f$ can be verified for common mechanisms like Gaussian or Laplacian.}. Therefore, to achieve the same discretization error for $m > 1$ and $m= 1$, we need an $o(\log m)$ multiplicative overhead to our computation time of each numerical integration. However, the time to express $m$ is even larger than this overhead. Thus, we need an $O(\log m)$ term for both heterogeneous and homogeneous compositions, resulting in a time complexity of $O(\log m)$ and $O(m\log m)$ for homogeneous and heterogeneous compositions, respectively.

\section{Proofs in Section \ref{sec:PLLRs}}
\label{app:proof_sec_2}
\subsection{Proof of Proposition \ref{prop:eps_delta_f_dual}}\label{app:proof_sec_C_1}
We present the proof of Proposition \ref{prop:eps_delta_f_dual} in this section. The proof relies on two Lemmas that are of self-interest and we first present the lemmas. The proof of Proposition \ref{prop:eps_delta_f_dual} is straightforward from results of Lemmas. 
Recall that the trade-off functions $f_i = T(P_i, Q_i)$ we consider are realized by the two following hypotheses:
\begin{equation*}
H_{0, i}: w_i\sim P_i \text { vs. } H_{1, i}: w_i\sim Q_i, 
\end{equation*}
where $P_i, Q_i$ are two distributions. To evaluate the trade-off function $f = \bigotimes_{i = 1}^m f_i$, we are essentially distinguishing between the two composite hypotheses
\begin{equation}
    H_0: \bm{w} \sim P_1\times P_2 \times \cdots \times P_m ~\text{ vs. }~ H_1: \bm{w} \sim Q_1\times Q_2 \times \cdots \times Q_m,
\end{equation}
where $\bm{w} = (w_1, ..., w_m)$ is the concatenation of all the $w_i$'s. 
The following lemma shows how to connect PLLRs of each $f_i$ to the trade-off function $f$.
\begin{lemma}
\label{lem:del_eps_naive_parametric_form}
Let $X_{1}, \ldots, X_{m}$ be the PLLR under the null hypothesis and, likewise, $Y_{1}, \ldots, Y_{m}$ be the PLLR under the alternative. Let $F_{X, m}, F_{Y, m}$ be the CDFs of $x\equiv X_{1}+\ldots+X_{m}$ and $Y\equiv Y_{1}+\cdots+Y_{m}$, respectively. Then we have the following relationship between privacy parameters and privacy-loss log-likelihood ratios
\begin{equation}
\begin{aligned}
\label{eq:eps_delta_naive_relation}
\epsilon &=\log \frac{F_{Y, m}^{\prime}(c)}{F_{X, m}^{\prime}(c)}, \\
\delta &=\frac{F_{X, m}^{\prime}(c)\left(1-F_{Y, m}(c)\right)-F_{Y, m}^{\prime}(c)\left(1-F_{X, m}(c)\right)}{F_{X, m}^{\prime}(c)},
\end{aligned}
\end{equation}
where $c$ is some constant.
\end{lemma}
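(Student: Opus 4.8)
\textbf{Proof plan for Lemma \ref{lem:del_eps_naive_parametric_form}.}

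The plan is to identify the optimal trade-off function $f = \bigotimes_{i=1}^m f_i = T(P, Q)$, where $P = P_1\times\cdots\times P_m$ and $Q = Q_1\times\cdots\times Q_m$, via the Neyman--Pearson lemma, and then trace through the duality between the trade-off function and the $(\epsilon,\delta)$ parametrization. The key observation is that the likelihood ratio $\frac{dQ}{dP}(\bm{w}) = \prod_{i=1}^m \frac{dQ_i(w_i)}{dP_i(w_i)}$, so $\log\frac{dQ}{dP}(\bm{w}) = \sum_{i=1}^m \log\frac{dQ_i(w_i)}{dP_i(w_i)}$. Under $H_0$ (i.e. $\bm{w}\sim P$) this sum is distributed exactly as $X_1+\cdots+X_m$, with CDF $F_{X,m}$; under $H_1$ it is distributed as $Y_1+\cdots+Y_m$, with CDF $F_{Y,m}$. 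So the whole problem reduces to a one-dimensional hypothesis test on the log-likelihood-ratio statistic.

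The main steps I would carry out: (i) By Neyman--Pearson, the optimal level-$\alpha$ test rejects $H_0$ when the likelihood ratio exceeds a threshold $e^c$ (with possible randomization on the boundary, which I will assume away or handle by a continuity/atomless argument, since PLLRs here are continuous). (ii) For a threshold $c$, the type-I error is $\alpha = \P_{H_0}(\sum X_i > c) = 1 - F_{X,m}(c)$ and the type-II error is $\beta = \P_{H_1}(\sum Y_i \le c) = F_{Y,m}(c)$; this gives the trade-off curve parametrized by $c$. (iii) Differentiate to get the slope of the trade-off curve: $\frac{d\beta}{d\alpha} = \frac{d\beta/dc}{d\alpha/dc} = \frac{F_{Y,m}'(c)}{-F_{X,m}'(c)}$. (iv) Invoke the duality (Fact \ref{fact:1}): the tangent line to the trade-off curve $f$ at a point has slope $-e^\epsilon$ and the corresponding $\delta$ is read off from where this tangent meets the axes, i.e. $1 - \delta - e^\epsilon\alpha = \beta$ at the tangency point. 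Setting slope $= -e^\epsilon$ gives $e^\epsilon = F_{Y,m}'(c)/F_{X,m}'(c)$, and solving $1-\delta = \beta + e^\epsilon\alpha = F_{Y,m}(c) + \frac{F_{Y,m}'(c)}{F_{X,m}'(c)}(1-F_{X,m}(c))$ yields exactly the stated formula for $\delta$ after rearranging over the common denominator $F_{X,m}'(c)$.

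The step I expect to be the main obstacle is making the Neyman--Pearson / duality argument fully rigorous: I need to know that the trade-off function $f$ is differentiable (or at least handle the parametrization at points where it is not), that the likelihood ratio has no atoms so that no randomization is needed, and that the convex-conjugate formula $\delta(\epsilon) = 1 + f^*(-e^\epsilon)$ from Fact \ref{fact:1} translates into the tangent-line reading I used. In the subsequent Proposition \ref{prop:eps_delta_f_dual} these regularity concerns are bypassed by passing to an inequality valid for all $\epsilon$, but at the level of this lemma I would either assume the PLLR distributions are absolutely continuous (so $F_{X,m}, F_{Y,m}$ are differentiable and the derivation is clean) or note that the formula should be read in the appropriate one-sided/limiting sense. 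Everything else is bookkeeping: substituting the explicit $\alpha,\beta$ expressions and simplifying.
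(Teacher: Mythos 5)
Your proposal is correct and follows essentially the same route as the paper's proof: Neyman--Pearson reduction to a threshold test on the decomposed log-likelihood-ratio statistic, the parametrization $\alpha = 1-F_{X,m}(c)$, $\beta = F_{Y,m}(c)$, differentiation to read off the slope $-e^\epsilon$, and extraction of $\delta$ from the tangent-line intercept (the paper cites Proposition 2.5 of~\cite{dong2019gaussian} for the tangent-line reading, which is exactly the translation of Fact~\ref{fact:1} you describe). Your regularity caveats about differentiability and atomlessness are reasonable and are implicitly assumed in the paper's derivation as well.
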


\begin{proof}[Proof of Lemma \ref{lem:del_eps_naive_parametric_form}]
To distinguish between $H_0: P_1\times P_2 \times ... \times P_m$ vs.  $H_1: Q_1\times Q_2 \times ... \times Q_m$, By the Neyman-Pearson lemma, we know that each point of the trade-off function $f$ is realized by a likelihood ratio test (cut-off at some threshold $c$). So, the trade-off function takes a parametric form $f(\alpha)=\beta$, 
where $\alpha$ is the type-I error of the test, and $\beta$ is type-II error of the test:
\begin{align*}
\alpha&=\mathbb{P}_{H_0}\left(\log\left(\frac{d P_1\times P_2 \times \cdots \times P_m}{d Q_1\times Q_2 \times \cdots \times Q_m}(\bm{w})\right) > c\right) \\
\beta&=\mathbb{P}_{H_1}\left(\log\left(\frac{d P_1\times P_2 \times \cdots \times P_m}{d Q_1\times Q_2 \times \cdots \times Q_m}(\bm{w})\right) \le c\right)
\end{align*}
Note that under $H_0$, we have
\begin{align*}
    \log\left(\frac{d P_1\times P_2 \times \cdots \times P_m}{d Q_1\times Q_2 \times \cdots \times Q_m}(\bm{w})\right) &= \log\left(\frac{d P_1}{dQ_1}(w_1) \times\cdots\times\frac{d P_m}{dQ_m}(w_m)\right)\\
    &= \log\left(\frac{d P_1}{dQ_1}(w_1)\right) +  \cdots + \log\left(\frac{d P_m}{dQ_m}(w_m)\right)\\
    &= X_1 + \cdots + X_m = X.
\end{align*}
and similarly under $H_1$, 
\begin{align*}
    \log\left(\frac{d P_1\times P_2 \times \cdots \times P_m}{d Q_1\times Q_2 \times \cdots \times Q_m}(\bm{w})\right) 
    &= Y_1 + \cdots + Y_m = Y.
\end{align*}
So, we can simplfy the parametric form of $f$ by $f(\alpha) = \beta$, where
\begin{align*}
\alpha&=\mathbb{P}\left(X_{1}+\cdots+X_{m}>c\right) =1-F_{X, m}(c) \\
\beta&=\mathbb{P}\left(Y_{1}+\cdots+Y_{m} \leq c\right) = F_{Y, m}(c).
\end{align*}

This allows us to simply write
\begin{equation*}
f(\alpha)=F_{Y, m} \circ F_{X, m}^{-1}(1-\alpha).
\end{equation*}
For a point $(\alpha, \beta)$ on the trade-off function $f$, where
\begin{equation*}
\beta=F_{Y, m}(c)=F_{Y, m}\left(F_{X, m}^{-1}(1-\alpha)\right),
\end{equation*}
and $\alpha$ is small. By the equivalence given in Proposition 2.5 in~\cite{dong2019gaussian}, we know that the slope of the tangent line passing through $(\alpha, \beta)$ (for small $\alpha$) is given by
\begin{equation*}
\begin{aligned}
-\mathrm{e}^{\epsilon} = \frac{df}{d\alpha}(\alpha)&=F_{Y, m}^{\prime}\left(F_{X, m}^{-1}(1-\alpha)\right) \cdot \frac{1}{F_{X, m}^{\prime}\left(F_{X, m}^{-1}(1-\alpha)\right)} \cdot(-1) =-\frac{F_{Y, m}^{\prime}\left(F_{X, m}^{-1}(1-\alpha)\right)}{F_{X, m}^{\prime}\left(F_{X, m}^{-1}(1-\alpha)\right)},
\end{aligned}
\end{equation*}
which gives
\begin{equation*}
\epsilon=\log \frac{F_{Y, m}^{\prime}\left(F_{X, m}^{-1}(1-\alpha)\right)}{F_{X, m}^{\prime}\left(F_{X, m}^{-1}(1-\alpha)\right)}=\log \frac{F_{Y, m}^{\prime}(c)}{F_{X, m}^{\prime}(c)}.
\end{equation*}
The equation of the tangent line takes the form of
\begin{equation*}
y=-\frac{F_{Y, m}^{\prime}(c)}{F_{X, m}^{\prime}(c)}\left(x-1+F_{X, m}(c)\right)+F_{Y, m}(c).
\end{equation*}
Using the same proposition,we know the intercept of the line is $1-\delta$, so we should have
\begin{equation*}
1-\delta=\frac{F_{Y, m}^{\prime}(c)\left(1-F_{X, m}(c)\right)}{F_{X, m}^{\prime}(c)}+F_{Y, m}(c)=\frac{F_{Y, m}^{\prime}(c)\left(1-F_{X, m}(c)\right)+F_{X, m}^{\prime}(c) F_{Y, m}(c)}{F_{X, m}^{\prime}(c)},
\end{equation*}
which gives
\begin{equation*}
\begin{aligned}
\delta &=1-\frac{F_{Y, m}^{\prime}(c)\left(1-F_{X, m}(c)\right)+F_{X, m}^{\prime}(c) F_{Y, m}(c)}{F_{X, m}^{\prime}(c)} \\
&=\frac{F_{X, m}^{\prime}(c)\left(1-F_{Y, m}(c)\right)-F_{Y, m}^{\prime}(c)\left(1-F_{X, m}(c)\right)}{F_{X, m}^{\prime}(c)}.
\end{aligned}
\end{equation*}
Therefore, $\epsilon$ and $\delta$ takes the following parametric form as in the statement of the lemma,
\begin{equation*}
\begin{aligned}
\epsilon &=\log \frac{F_{Y, m}^{\prime}(c)}{F_{X, m}^{\prime}(c)} \\
\delta &=\frac{F_{X, m}^{\prime}(c)\left(1-F_{Y, m}(c)\right)-F_{Y, m}^{\prime}(c)\left(1-F_{X, m}(c)\right)}{F_{X, m}^{\prime}(c)}.
\end{aligned}
\end{equation*}
\end{proof}

To simplify the relation in \eqref{eq:eps_delta_naive_relation}, we observe the following interesting lemma about PLLRs.

\begin{lemma}
\label{lem:eps_delta_simplify}
 Let $X_1, X_2, \dots, X_m$ and $Y_1, Y_2, \dots, Y_m$ and $F_{X,m}, F_{Y,m}$ be defined as in Lemma \ref{lem:del_eps_naive_parametric_form}. Let $f_{X,m}, f_{Y,m}$ be the PDFs of $\sum_{i=1}^m X_i$ and $\sum_{i=1}^m Y_i$. Then we have for any $c\in \mathbb{R}$,
 \begin{equation}\label{eq:fy_fx_relation}
    c=\log\frac{f_{Y,m}(c)}{f_{X,m}(c)}.
\end{equation}
\end{lemma}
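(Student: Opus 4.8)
The plan is to exploit the special structure of the PLLRs: $X_i = \log\frac{dQ_i}{dP_i}(\xi_i)$ with $\xi_i \sim P_i$, while $Y_i = \log\frac{dQ_i}{dP_i}(\zeta_i)$ with $\zeta_i \sim Q_i$, so $X_i$ and $Y_i$ are the \emph{same} function of a random variable, just under the two different measures. The key classical fact is that if $L = \log\frac{dQ}{dP}$ is the log-likelihood ratio, then the law of $L$ under $Q$ is an exponential tilt of its law under $P$: for any bounded test function $h$, $\mathbb{E}_Q[h(L)] = \mathbb{E}_P[e^{L} h(L)]$, hence the densities satisfy $f_{L\mid Q}(t) = e^{t} f_{L\mid P}(t)$ pointwise (wherever densities exist). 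First I would establish this single-coordinate identity carefully, noting it holds on the support where $P$ and $Q$ are mutually absolutely continuous; the case of a singular part can be handled since it contributes mass at $t = \pm\infty$ and does not affect the density relation on $\mathbb{R}$.

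Next I would lift this to the sum. Because the $\xi_i$ (resp.\ $\zeta_i$) are independent, the joint log-likelihood ratio of $\prod Q_i$ versus $\prod P_i$ evaluated at $\bm{w}$ is exactly $\sum_i \log\frac{dQ_i}{dP_i}(w_i)$. So $\sum_i X_i$ is the log-likelihood-ratio statistic under $H_0$ and $\sum_i Y_i$ is the same statistic under $H_1$, i.e.\ we are back in the one-dimensional situation with $P \leftarrow \prod P_i$, $Q \leftarrow \prod Q_i$, and $L \leftarrow \sum_i X_i$. Applying the tilting identity from the first step to this product measure gives $f_{Y,m}(c) = e^{c} f_{X,m}(c)$ for all $c$, which rearranges to \eqref{eq:fy_fx_relation}. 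Alternatively, one can prove the sum version directly by a change-of-variables / convolution argument: for the density of $\sum Y_i$ one integrates $\prod_i f_{Y_i}(t_i)$ over the hyperplane $\sum t_i = c$, substitute $f_{Y_i}(t_i) = e^{t_i} f_{X_i}(t_i)$, pull out $e^{\sum t_i} = e^{c}$ from the integral, and recognize what remains as the convolution density of $\sum X_i$ at $c$.

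The main obstacle is not conceptual but a matter of rigor about densities: the lemma as stated presumes $f_{X,m}$ and $f_{Y,m}$ exist, and one should be slightly careful that the exponential-tilt relation is a statement about Radon--Nikodym derivatives of the \emph{laws of the sum}, valid a.e., rather than about densities that might fail to exist pointwise. I would phrase the argument at the level of measures — showing that the law of $\sum Y_i$ has Radon--Nikodym derivative $e^{c}$ with respect to the law of $\sum X_i$ — and only then specialize to densities where they exist, which is exactly the regime in which \eqref{eq:eps_delta_naive_relation} was derived. Feeding $f_{Y,m}(c) = e^c f_{X,m}(c)$ back into Lemma \ref{lem:del_eps_naive_parametric_form} will then collapse the parametric expressions for $\epsilon$ and $\delta$ to $\epsilon = c$ and the clean form \eqref{eq:simplify_eps_delta}, which is presumably the point of stating this lemma.
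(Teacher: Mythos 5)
Your argument is correct and in fact offers a cleaner route than the one the paper takes. The paper establishes the one-coordinate tilting identity $f_{Y,1}(t) = e^{t} f_{X,1}(t)$ exactly as you do (via the change-of-measure computation $\E_Q[g(L)] = \E_P[e^{L} g(L)]$), but then proceeds by induction on $m$, proving and invoking a small convolution lemma: if $f_{B_1} = g\, f_{A_1}$ and $f_{B_2} = g\, f_{A_2}$ with $g$ multiplicative ($g(x+y)=g(x)g(y)$), then the convolution densities satisfy $f_{B_1+B_2} = g\, f_{A_1+A_2}$. Your secondary route (the direct multivariate convolution integral over $\sum t_i = c$, pulling out $e^{c}$) is essentially the same idea collapsed into one step. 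Your primary route, however, is genuinely different and more conceptual: you observe that $\sum_i X_i$ is itself the log-likelihood ratio of $\prod Q_i$ against $\prod P_i$ evaluated under $\prod P_i$ — a fact the paper already uses inside the proof of Lemma \ref{lem:del_eps_naive_parametric_form} — and so the $m=1$ tilting identity applies verbatim with $P \leftarrow \prod P_i$, $Q \leftarrow \prod Q_i$. This sidesteps the induction and the convolution lemma entirely, and makes transparent \emph{why} the tilt relation is closed under composition: the class of PLLRs is closed under taking products of the underlying hypotheses. Your remark about phrasing the argument at the level of Radon--Nikodym derivatives of the laws (and only then specializing to densities where they exist) is also a legitimate sharpening; the paper's proof implicitly selects ``a version of both PDFs'' at this point without elaboration.
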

\begin{proof}[Proof of Lemma \ref{lem:eps_delta_simplify}]
We use induction on $m$, the number of compositions, to prove this Lemma. 
\par\textit{Base Case: $m=1$.} When $m = 1$, we write out the forms of $X$ and $Y$ explicitly as \begin{equation*}
    X = \log\frac{Q_1(w_1)}{P_1(w_1)} \quad\text{where $w_1\sim P_1$},
\end{equation*}
\begin{equation*}
    Y = \log\frac{Q_1(w_1)}{P_1(w_1)} \quad\text{where $w_1\sim Q_1$}.
\end{equation*}
As a result, for any measurable function $g:\mathbb{R}\rightarrow\mathbb{R}$, we have\begin{align*}
    \E_Y[g(Y)]&=\E_{w_1\sim Q_1}\left[g\left(\log\frac{Q_1(w_1)}{P_1(w_1)}\right)\right]\\
    &=\E_{w_1\sim P_1}\left[g\left(\log\frac{Q_1(w_1)}{P_1(w_1)}\right)\frac{Q_1(w_1)}{P_1(w_1)}\right]\\
    &=\E_{w_1\sim P_1}\left[g\left(\log\frac{Q_1(w_1)}{P_1(w_1)}\right)e^X\right]\\
    &=\E_X[g(X)e^X].
\end{align*}
Since the above equality holds for all $g$, we must have that their exists a version of both PDFs such that $f_{Y, 1}(t) = f_{X, 1}(t)e^t$. This shows that for $m=1$,\begin{equation*}
     c=\log\frac{f_{Y, 1}(c)}{f_{X, 1}(c)}.
\end{equation*}
\par\textit{Induction Step: }Suppose the result is true for $m$, we now show that it is also true for $m+1$ compositions. We now claim the following Lemma.
\begin{lemma}
\label{lemma:con_prv}
Let $A_1, A_2, B_1, B_2$ be four random variables.
Denote the PDFs of $A_1, A_2, B_1, B_2$ by $f_{A_1}, f_{A_2}, f_{B_1}, f_{B_2}$ respectively. Suppose further that \begin{equation*}
    f_{B_1}(t)=g(t)f_{A_1}(t), \quad f_{B_2}(t)=g(t)f_{A_2}(t)\quad\text{for all $t$,}
\end{equation*}
for some function $g$ satisfying $g(x+y)=g(x)g(y)$. Let $f_{A,2}, f_{B,2}$ denote the density function for $A_1+A_2, B_1+B_2$. Then \begin{equation*}
     f_{B,2}(t)=g(t)f_{A,2}(t)\quad\text{for all $t$}.
\end{equation*}
\end{lemma}
Proof of Lemma \ref{lemma:con_prv} will be given at the end of the proof. Applying Lemma \ref{lemma:con_prv} on random variables $A_1 = \sum_{i=1}^m X_i, A_2 = X_{m+1}$ and $B_1= \sum_{i=1}^m Y_i, B_2 = Y_{m+1}$, we will show that we get the desired relationship for $m+1$ compositions. By induction hypothesis we know that $f_{B_1}(t) = g(t)f_{A_1}(t)$ for $g(t) = e^t$. Since $f_{Y_{m+1}}(t) = g(t)f_{X_{m+1}}(t)$ by the base case in induction, and $g(x+y)=g(x)g(y)$, we have $f_{Y,m+1}(t)=g(t)f_{X,m+1}(t)$ for all $t$. This indicates that we have\begin{equation*}
    c = \log \frac{f_{Y,m+1}(c)}{f_{X,m+1}(c)}
\end{equation*}
for any $c$. Hence we have completed the induction step and concluded the proof. 

\end{proof}

\begin{proof}[Proof of Lemma \ref{lemma:con_prv}]
We use the convolution formula on $B_1, B_2$ and obtain\begin{align*}
    f_{B,2}(t) &= \int_{-\infty}^\infty f_{B_1}(t-u)f_{B_2}(u)du\\
    &=\int_{-\infty}^\infty g(t-u)g(u)f_{A_1}(t-u)f_{A_2}(u)du\\
    &=\int_{-\infty}^\infty g(t)f_{A_1}(t-u)f_{A_2}(u)du\\
    &=g(t)f_{A,2}(t)
\end{align*}
by convolution formula on $A_1, A_2$. 

\end{proof}

\begin{proof}[Proof of Proposition \ref{prop:eps_delta_f_dual}]
By Lemma \ref{lem:eps_delta_simplify}, $\frac{F'_{Y,m}(c)}{F'_{X,m}(c)}=e^c$, so the parameter $c$ in Lemma  \ref{lem:del_eps_naive_parametric_form} equals $\epsilon$. Substituting $c=\epsilon$ back to (\ref{eq:eps_delta_naive_relation}) gives
\begin{equation}
    \delta(\epsilon) = 1-F_{Y,m}(\epsilon)-e^{\epsilon}(1-F_{X,m}(\epsilon)).
\end{equation}
This is exactly the $\delta_{X,Y}(\epsilon)$ we defined, and it fully characterizes the trade-off function associated with the ordered pair $(P_1\times P_2 \times \cdots \times P_m,Q_1\times Q_2 \times \cdots \times Q_m)$. 

In particular, when the trade-off function is symmetric, the final equivalence on $\epsilon\ge 0$ follows from Fact \ref{fact:1}.

\end{proof}

\begin{proof}[Proof of Lemma \ref{lem:composite_duality}.]
Define \begin{equation*}
    h(x) = \left(\inf_{\alpha\in\mathcal{I}} f^{{(\alpha)}}\right)^*(x),
\end{equation*}
which is convex and lower semi-continuous by definition of convex conjugate. By Fenchel–Moreau theorem, we have $h^{**} = h$. Denote $(\epsilon, \delta(\epsilon))$ to be the equivalent dual relationship to $f = \inf_\alpha\{f^{(\alpha)}\}^{**}$-DP. Using the convex dual characterization of trade-off functions as in  \cite{dong2019gaussian}, we have \begin{equation*}
    \delta(\epsilon) = 1 + \left(\inf_{\alpha\in\mathcal{I}} f^{{(\alpha)}}\right)^{***} (-e^{\epsilon}) =1+h^{**}(-e^{\epsilon})= 1+ h(-e^{\epsilon}).
\end{equation*}
By order reversing property of convex conjugate, we have\begin{align*}
    \delta(\epsilon)&= 1+h(-e^{\epsilon})\\
    &= 1 +\left(\inf_{\alpha\in\mathcal{I}} f^{(\alpha)}\right)^*(-e^{\epsilon})\\
    &= 1 + \sup _{\alpha\in\mathcal{I}} f^{(\alpha)*}(-e^{\epsilon})\\
    &= \sup _{\alpha\in\mathcal{I}}  \left(1 + f^{(\alpha)*}(-e^{\epsilon})\right)\\
    &= \sup _{\alpha\in\mathcal{I}} \delta^{(\alpha)}(\epsilon)
\end{align*}
where we used dual relationship for each $\alpha\in\mathcal{I}$ again in the last step. And the other direction follows directly from the duality of $f$-DP and $(\epsilon, \delta(\epsilon))$-DP, meaning that if the mechanism satisfies $(\epsilon, \sup_{\alpha\in \mathcal{I}}\delta^{(\alpha)})$-DP, then it also satisfies $f = \left(\inf_\alpha\{f^{(\alpha)}\}\right)^{**}$-DP.

In particular, when $f$ is symmetric, the equivalence with $f$-DP on $\epsilon\ge 0$ follows from Fact \ref{fact:1}.
\end{proof}

\section{Proofs in Section \ref{sec:edgeworth_approximation_bound}}
\label{app:proof_sec_4}
\subsection{Proof of Theorem \ref{thm:sub_gaussian_bound}}

\begin{proof}[Proof of Theorem \ref{thm:sub_gaussian_bound}]
We first briefly introduce the idea of the proof. The main idea is to construct a random variable $\tX_i$ by choosing an $a\ge 0$, such that it stochastically dominates $X_i$ (that is, $X_i \le \tX_i$ a.s.), and satisfies $\mathbb{E}(\tX_i) < 0$. We then choose $\eta(a) = -\mathbb{E}(\tX_i)$ which is a positive number. In what follows, we will explicitly construct $\tX_i$ so that $\tX_i$ can be decomposed into the sum of two sub-Gaussian random variables with parameters $\sigma_A^2, \sigma_B^2$.
Then since $X_i\leq \tX_i$ a.s., we deduce that\begin{equation*}
  \mathbb{P}\left(\sum_{i = 1}^m X_i \ge \epsilon\right)\leq \mathbb{P}\left(\sum_{i = 1}^m \tX_i \ge \epsilon\right) = \mathbb{P}\left(\sum_{i = 1}^m \tX_i -\sum_{i = 1}^m \E(\tX_i)\ge \epsilon + m\eta\right),
\end{equation*}
The final conclusion, which will be proved at the end, follows from the sub-Gaussian bounds.

For notation-wise convenience, we first define a quantity depending on the value of $\xi_i$, where \begin{equation*}
    \Delta(\xi_i) := X_i - (\xi_i\mu-\frac{1}{2}\mu^2) =  \log\left(p+\frac{1-p}{e^{\mu\xi_i-\frac{\mu^2}{2}}}\right).
\end{equation*}
It is obvious that $\Delta(\xi_i)$ is a strictly decreasing function of the value of $\xi_i$. Now, we construct the random variable $\tX_i$ as follows. Define
\begin{equation*}
\tX_i = A_i + B_i
\end{equation*}
where
\begin{equation}
    A_i=\begin{cases}
X_i &\text{if $\xi_i < a$,}\\
a^+\mu-\frac{1}{2}\mu^2 + \Delta(a) &\text{if $\xi_i \geq a$}.
\end{cases}\label{eq:Ai}
\end{equation}
and 
\begin{equation}
    B_i=\begin{cases}
0 &\text{if $\xi_i < a$,}\\
{\xi}_i\mu-a^+\mu &\text{if $\xi_i \geq a$}.
\end{cases}\label{eq:Bi}
\end{equation}
Here, we define $a^+ = \frac{\phi(a)}{1 - \Phi(a)}$. Note that $a^+>a$ for any $a>0$ by bounds on Mills ratio. To shed light on this decomposition, we first show that $\tX_i$ stochastically dominates $X_i$. Since $\Delta(\xi_i)$ is a decreasing function in $\xi_i$, hence when $\xi_i\geq a$, we have $\Delta(\xi_i) \leq  \Delta(a).$ As a result, when $\xi_i>a$,\begin{equation*}
    \tX_i = \xi_i\mu-\frac{1}{2}\mu^2 + \Delta(a) \geq  \xi_i\mu-\frac{1}{2}\mu^2 + \Delta(\xi_i)=X_i.
\end{equation*}
This guarantees that $\tX_i\geq X_i$ a.s.. Another good property of this decomposition, we observe that $A_i$ is sub-Gaussian due to Lemma \ref{lem:Ai_SG}, and $B_i$ is a mean-zero sub-Gaussian random variable from Lemma \ref{lem:Bi_SG}. Proof of the two Lemmas is postponed to the next section.

\par Note that the above construction is valid for any $a$. Now we show that there exists some $a>0$ such that $\E(\tX_i) = -\eta(a)<0$ where $\eta(a)$ only depends on $a$. Note that \begin{equation*}
    \E(\tX_i) - \E(X_i) = \int_{a}^\infty (\Delta(a) - \Delta(\xi)) \phi(\xi)d\xi,
\end{equation*}
where $\phi(x)$ is the density for standard Normal random variable. Also recall that $\E(X_i) < 0$. Then \begin{align*}
    \E(\tX_i) &= \E(X_i) +\int_{a}^\infty (\Delta(a) - \Delta(\xi)) \phi(\xi)d\xi\\
    & = \E(X_i) + e(a),
\end{align*}
where $e(a)$ satisfies that $\lim_{a\rightarrow \infty} e(a) = 0.$ This is because by construction $\int_{a}^\infty (\Delta(a) - \Delta(\xi)) \phi(\xi)d\xi \geq 0$ and that \begin{equation*}
    e(a)= \int_{a}^\infty \Delta(a)\phi(\xi)d\xi - \int_{a}^\infty \Delta(\xi)\phi(\xi)d\xi.
\end{equation*}
The second term vanishes as $\xi\rightarrow \infty$ since $\Delta(\xi)$ is integrable. For the first term, if $\Delta(a)<0$ the integral is already negative. If $\Delta(a)>0$ we have $\int_{a}^\infty \Delta(a)\phi(\xi)d\xi< \int_{a}^\infty \Delta(0)\phi(\xi)d\xi$ which also vanishes. As a result, we have shown that $\lim_{a\rightarrow\infty} e(a)\leq 0$. Combined with what we have above, we deduce that $\lim_{a\rightarrow \infty} e(a) = 0$ as required. Then we can pick $a$ large enough such that $e(a) = -\frac{1}{2}\E(X_i)$ and we have $\E(\tX_i) = \frac{1}{2}\E(X_i)<0.$ 
\par Now we can combine the previous results and prove the tail bound of $\sum_{i=1}^m X_i. $ Recall we have constructed random variable $\tX_i = A_i+B_i$ such that $\tX_i\geq X_i$ a.s. with $\E(\tX_i)=-\eta(a)<0$. Moreover, $A_i, B_i$ are both sub-Gaussian with parameters $\sigma_A^2$ and $\sigma_B^2$. Then we have\begin{align*}
    \mathbb{P}\left(\sum_{i = 1}^m X_i \ge \epsilon\right)&\leq \mathbb{P}\left(\sum_{i = 1}^m \tX_i \ge \epsilon\right) = \mathbb{P}\left(\sum_{i = 1}^m \tX_i -\sum_{i = 1}^m \E(\tX_i)\ge \epsilon + m\eta\right)\\
    &\leq \mathbb{P}\left(\sum_{i = 1}^m A_i -\sum_{i = 1}^m \E(A_i) + \sum_{i = 1}^m B_i -\sum_{i = 1}^m \E(B_i)\ge \epsilon + m\eta\right)\\
    &\leq \mathbb{P}\left(\sum_{i = 1}^m A_i -\sum_{i = 1}^m \E(A_i) \ge \frac{\epsilon + m\eta}{2}\right)+ \mathbb{P}\left(\sum_{i = 1}^m B_i -\sum_{i = 1}^m \E(B_i) \ge \frac{\epsilon + m\eta}{2}\right),
\end{align*}
where the last inequality follows from the union bound. Finally, since $A_i, B_i$ are both sub-Gaussian, we know that their sum $\sum_{i = 1}^m A_i, \sum_{i = 1}^m B_i$ are still sub-Gaussian. Hence \begin{equation*}
    \mathbb{P}\left(\sum_{i = 1}^m A_i -\sum_{i = 1}^m \E(A_i) \ge \frac{\epsilon + m\eta}{2}\right)\leq \exp\left(-\frac{(\epsilon + m\eta)^2}{8m\sigma_A^2}\right),
\end{equation*}
\begin{equation*}
    \mathbb{P}\left(\sum_{i = 1}^m B_i -\sum_{i = 1}^m \E(B_i) \ge \frac{\epsilon + m\eta}{2}\right)\leq \exp\left(-\frac{(\epsilon + m\eta)^2}{8m\sigma_B^2}\right).
\end{equation*}
As a result, \begin{equation*}
   \mathbb{P}\left(\sum_{i = 1}^m X_i \ge \epsilon\right)\leq  \exp\left(-\frac{(\epsilon + m\eta)^2}{8m\sigma_A^2}\right)+\exp\left(-\frac{(\epsilon + m\eta)^2}{8m\sigma_B^2}\right)\leq 2\exp\left(-\frac{(\epsilon + m\eta)^2}{8m\tau^2}\right),
\end{equation*}
where \begin{align*}
    \tau^2 &= \max\{\sigma_A^2, \sigma_B^2\} \\
    &= \max\left\{\frac{(\log(1-p+pe^{\mu a-\frac{1}{2}\mu^2})+\mu(a^+-a)-\log(1-p))^2}{4}, \mu^2, \frac{(a^+ - a)^2\mu^2}{2\log(\Phi(a^+) - \Phi(a))}\right\}.
\end{align*}

\end{proof}

\subsection{Technical Lemmas}
\begin{lemma}
\label{lem:Ai_SG}
The random variable $A_i$, defined in \eqref{eq:Ai} is sub-Gaussian random variable with parameter $\sigma_A^2$ where $
    \sigma_A^2 = \dfrac{(\log(1-p+pe^{\mu a-\frac{1}{2}\mu^2})+\mu(a^+-a)-\log(1-p))^2}{4}.$
\end{lemma}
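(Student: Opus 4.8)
## Proof Plan for Lemma \ref{lem:Ai_SG}

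The plan is to show that $A_i$ is a bounded random variable and then invoke the standard fact that any random variable taking values in an interval $[\ell, u]$ is sub-Gaussian with parameter $\sigma^2 = (u-\ell)^2/4$ (Hoeffding's lemma). So the work reduces to identifying the exact range of $A_i$ and verifying that its length is $2\sigma_A$ with $\sigma_A$ as stated.

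First I would recall the definition in \eqref{eq:Ai}: on the event $\{\xi_i \geq a\}$, $A_i$ is the constant $a^+\mu - \tfrac12\mu^2 + \Delta(a)$, where $\Delta(a) = \log\bigl(p + (1-p)e^{-\mu a + \mu^2/2}\bigr)$, so this value equals $a^+\mu - \tfrac12\mu^2 + \log\bigl(p + (1-p)e^{-\mu a + \mu^2/2}\bigr)$, which after combining logarithms can be written as $\mu(a^+ - a) + \log(1 - p + p\, e^{\mu a - \mu^2/2})$ — matching one endpoint of the claimed interval once we pin down its sign relative to the other values. On the event $\{\xi_i < a\}$, $A_i = X_i = \xi_i \mu - \tfrac12\mu^2 + \Delta(\xi_i) = \log(1 - p + p\,e^{\mu \xi_i - \mu^2/2})$. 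The key observation is that $X_i = \log(1 - p + p\,e^{\mu\xi_i - \mu^2/2})$ is a strictly increasing function of $\xi_i$, ranging over $\log(1-p)$ as $\xi_i \to -\infty$ up to $\log(1 - p + p\,e^{\mu a - \mu^2/2})$ as $\xi_i \uparrow a$. Hence on $\{\xi_i < a\}$ we have $A_i \in (\log(1-p),\ \log(1 - p + p\,e^{\mu a - \mu^2/2}))$.

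Next I would check that the constant value on $\{\xi_i \geq a\}$ is the overall maximum: since $a^+ > a$ (bound on Mills' ratio, already noted in the text) and $\mu > 0$, the extra term $\mu(a^+ - a) > 0$ pushes the constant value above $\log(1 - p + p\,e^{\mu a - \mu^2/2})$. Therefore $A_i$ is supported on the interval with lower endpoint $\ell = \log(1-p)$ and upper endpoint $u = \mu(a^+ - a) + \log(1 - p + p\,e^{\mu a - \mu^2/2})$, so the range has length $u - \ell = \log(1 - p + p\,e^{\mu a - \frac12\mu^2}) + \mu(a^+ - a) - \log(1-p)$. Applying Hoeffding's lemma gives that $A_i$ is sub-Gaussian with parameter $\sigma_A^2 = (u-\ell)^2/4$, which is exactly the stated expression.

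I do not anticipate a serious obstacle here; the only point requiring a little care is the sign/ordering argument establishing which endpoint is the maximum and which the infimum — specifically confirming $a^+ > a$ and that the monotone branch $X_i = \log(1-p+p e^{\mu\xi_i - \mu^2/2})$ stays strictly below the constant branch, so that the closure of the support is precisely $[\ell, u]$ and the sub-Gaussian parameter is not loose. Everything else is a direct application of Hoeffding's lemma for bounded random variables.
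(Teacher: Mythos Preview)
Your proposal is correct and matches the paper's approach essentially step for step: the paper also shows that $A_i$ is bounded in the interval $(\log(1-p),\ \log(1-p+pe^{\mu a-\frac{1}{2}\mu^2})+\mu(a^+-a))$ by analyzing the two cases $\xi_i<a$ and $\xi_i\ge a$, uses $a^+>a$ to order the endpoints, and then invokes Hoeffding's inequality to obtain the sub-Gaussian parameter $(u-\ell)^2/4$.
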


\begin{proof}[Proof of Lemma \ref{lem:Ai_SG}]
The proof of Lemma \ref{lem:Ai_SG} is straightforward, we show that $A_i$ is bounded and thus sub-Gaussian by Hoeffding's inequality. Note that when $\xi_i < a$, we have $A_i = X_i= \log(1-p+pe^{\mu\xi_i-\frac{1}{2}\mu^2}) < \log(1-p+pe^{\mu a-\frac{1}{2}\mu^2})$. Moreover, since $X_i$ is bounded below by $\log(1-p)$, we deduce that when $\xi_i<a$, we have \begin{equation*}
    \log(1-p)<A_i<  \log(1-p+pe^{\mu a-\frac{1}{2}\mu^2}), 
\end{equation*}
which is bounded as desired. 
\par On the other hand, when $\xi_i > a$, by definition of $\Delta(\xi_i)$, \begin{equation*}
    A_i=a^+\mu-\frac{1}{2}\mu^2 + \log\left(p+\frac{1-p}{e^{\mu a-\frac{\mu^2}{2}}}\right)=\log(1-p+pe^{\mu a-\frac{1}{2}\mu^2})+\mu(a^+-a),
\end{equation*}
which is a constant. 
Since $a^+>a$, in this case the above constant is greater than $\log(1-p+pe^{\mu a-\frac{1}{2}\mu^2}).$ Combine the above two settings, we deduce that $A_i\in(\log(1-p), \log(1-p+pe^{\mu a-\frac{1}{2}\mu^2})+\mu(a^+-a))$ is a bounded random variable. By Hoeffding's inequality, it is sub-Gaussian with parameter defined in the Lemma. 

\end{proof}

\begin{lemma}\label{lem:Bi_SG}
The random variable $B_i$ defined in \Cref{eq:Bi} is a mean-zero sub-Gaussian random variable with parameter $\sigma_B^2 = \mu^2 \max\left\{1, \frac{(a^+ - a)^2}{2\log(\Phi(a^+) - \Phi(a))}\right\}$.
\end{lemma}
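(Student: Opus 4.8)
Throughout write $Z := \xi_i \sim \mathcal N(0,1)$, $q := 1-\Phi(a) = \P(Z\ge a)$, and recall $a^+ = \phi(a)/(1-\Phi(a))$, so that $a^+$ is exactly the conditional mean of $Z$ given $Z\ge a$. I would first dispatch the mean-zero claim and then attack the sub-Gaussian parameter via a decomposition of $B_i$ into a Lipschitz part (handled by Gaussian concentration) and a bounded part supported on a rare event (handled by a sharp bounded-variable bound).

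\textbf{Step 1 (mean zero).} Directly, $\E[B_i] = \mu\bigl(\E[Z\mathbf 1\{Z\ge a\}] - a^+\,\P(Z\ge a)\bigr)$. Using the elementary identity $\int_a^\infty x\phi(x)\,dx = \phi(a)$ together with $a^+\,\P(Z\ge a) = \tfrac{\phi(a)}{1-\Phi(a)}(1-\Phi(a)) = \phi(a)$, the two terms cancel and $\E[B_i]=0$. (Equivalently, this is just the inverse-Mills-ratio identity built into the definition of $a^+$.)

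\textbf{Step 2 (sub-Gaussian parameter, decomposition).} The plan is to use the exact pointwise identity
\[
B_i \;=\; \mu(Z-a^+)^+ \;+\; \mu(Z-a^+)\,\mathbf 1\{a\le Z\le a^+\},
\]
which one checks case by case ($Z<a$: both terms $0$; $Z\in[a,a^+]$: first term $0$, second $=\mu(Z-a^+)$; $Z>a^+$: first term $=\mu(Z-a^+)$, second $0$). For the first summand, $z\mapsto \mu(z-a^+)^+$ is $\mu$-Lipschitz, so by the Gaussian concentration (Tsirelson--Ibragimov--Sudakov) inequality its centered version has moment generating function bounded by $\exp(\mu^2 t^2/2)$; this yields the ``$\mu^2$'' term in $\sigma_B^2$. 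For the second summand, note it takes values in $[-\mu(a^+-a),0]$ and is nonzero only on the event $\{a\le Z\le a^+\}$, whose probability is exactly $\Phi(a^+)-\Phi(a)$; a sharp sub-Gaussian bound for a bounded variable that vanishes off an event of probability $\Phi(a^+)-\Phi(a)$ (e.g.\ the optimal Bernoulli/Kearns--Saul type estimate, in the regime where this probability is small) produces a variance proxy of the form $\mu^2(a^+-a)^2/\bigl(2\log(\Phi(a^+)-\Phi(a))^{-1}\bigr)$. Since the two summands live on disjoint parts of the sample space (together with the atom at $0$ on $\{Z<a\}$) and are tied together by $\E[B_i]=0$, I would combine the three indicator-restricted contributions to $\E[e^{tB_i}]$ and bound the whole by $\exp\bigl(\sigma_B^2 t^2/2\bigr)$ with $\sigma_B^2 = \mu^2\max\{1,\,(a^+-a)^2/(2\log(\Phi(a^+)-\Phi(a)))\}$, i.e.\ the max of the two proxies above.

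\textbf{Step 3 (alternative route and the main obstacle).} A second, more computational route is to write the moment generating function in closed form: completing the square in the Gaussian integral gives $\E[e^{tB_i}] = \Phi(a) + e^{\mu^2 t^2/2 - t\mu a^+}\bigl(1-\Phi(a-t\mu)\bigr)$, and one then verifies $\E[e^{tB_i}]\le e^{\sigma_B^2 t^2/2}$ for all $t\in\mathbb R$. The difficulty with either route is the same: getting the \emph{sharp} constant rather than a crude one. Near $t=0$ the bound must match the true variance $\operatorname{Var}(B_i)\le\sigma_B^2$ (which is where mean-zeroness is essential, since the naive bound that merely uses $B_i\ge-\mu(a^+-a)$ already fails for small $|t|$); for $|t|$ large the Gaussian right tail of $(Z-a^+)^+$ dominates and gives the $\mu^2$ scaling; and the interpolation for intermediate $t$ requires a monotonicity/convexity argument on $\log\E[e^{tB_i}]$. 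In the decomposition approach the analogous obstacle is that the two pieces are not independent, so their variance proxies do not simply add, and one must exploit their disjoint supports (and the $\E[B_i]=0$ constraint) to land on the $\max$ rather than a sum; I expect this bookkeeping to be the crux of the argument.
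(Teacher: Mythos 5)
Your Step 1 (mean zero) is correct and matches the paper. But the sub-Gaussian part of your proposal is never closed: in Step 2 you explicitly leave open how the two non-independent summands combine to yield the $\max$ rather than a sum of proxies, and in Step 3 you leave open the verification that $\log\E[e^{tB_i}]\le\sigma_B^2 t^2/2$ for intermediate $t$. Both gaps are real, and they are exactly the ones you flag as ``the crux.'' As written the proposal is an outline, not a proof.

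The paper avoids the MGF entirely and argues through one-sided tail bounds, which makes the problem elementary. Write $B_i$ as the two-point mixture taking value $0$ with probability $\Phi(a)$ and value $\mu(\tilde\xi_i-a^+)$ with probability $1-\Phi(a)$, where $\tilde\xi_i$ is $\mathcal N(0,1)$ truncated to $[a,\infty)$. For the upper tail one computes exactly
\[
\P(B_i>t)=(1-\Phi(a))\,\P(\tilde\xi_i>a^++t/\mu)=1-\Phi(a^++t/\mu)\le 1-\Phi(t/\mu)\le e^{-t^2/(2\mu^2)},
\]
which yields the $\mu^2$ branch. For the lower tail the two observations are that $B_i\ge -\mu(a^+-a)$ almost surely (so $\P(B_i<-t)=0$ once $t>\mu(a^+-a)$) and that $\P(B_i<0)=\Phi(a^+)-\Phi(a)$; for $0<t\le\mu(a^+-a)$ the crude bound $\P(B_i<-t)\le\P(B_i<0)$ is then matched by $e^{-t^2/(2\sigma^2)}$ at the worst case $t=\mu(a^+-a)$, which forces the second branch of $\sigma_B^2$. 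No interpolation or Kearns--Saul machinery is needed. (Incidentally, the stated $\sigma_B^2$ in the lemma has a sign slip: since $\Phi(a^+)-\Phi(a)<1$ the denominator $2\log(\Phi(a^+)-\Phi(a))$ is negative; the intended quantity is $(a^+-a)^2/\bigl(2\log\tfrac{1}{\Phi(a^+)-\Phi(a)}\bigr)$, which is what both the paper's own derivation and your Kearns--Saul heuristic actually produce.)
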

\begin{remark}
We note that as a function of $a$, $\frac{(a^+ - a)^2}{2\log(\Phi(a^+) - \Phi(a))}$ is in fact a decreasing function, and is always less than $1$ as $a > 0$. Its plot can be found in Figure \ref{fig:Bi_sub-Gaussian_left}. Therefore, by truncating normal at $a$, we essentially loss nothing, since $B_i$ is still a sub-Gaussian random variable with parameter $\mu$.
\end{remark}
\begin{figure}
    \centering
    \includegraphics[width = 0.6\linewidth]{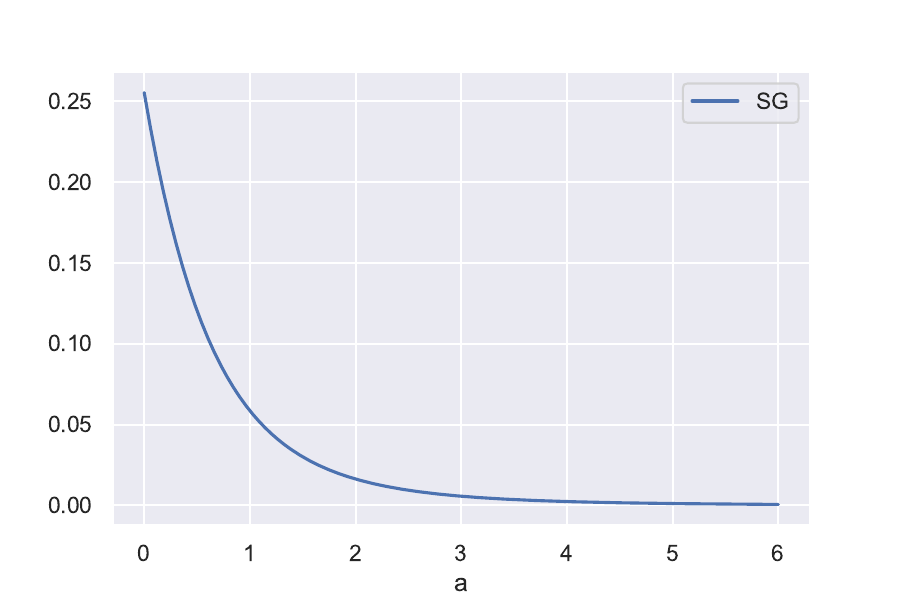}
    \caption{The value of $\frac{(a^+ - a)^2}{2\log(\Phi(a^+) - \Phi(a))}$ when $a > 0$.}
    \label{fig:Bi_sub-Gaussian_left}
\end{figure}
\begin{proof}[Proof of Lemma \ref{lem:Bi_SG}]
Recall the definition in \Cref{eq:Bi}, we can re-write $B_i$ as a mixture random variable 
\begin{equation*}
    {B}_i=\begin{cases}
0 &\text{w.p. $\P(\xi_i < a$),}\\
\widetilde{\xi}_i\mu-a^+\mu &\text{w.p. $\P(\xi_i \geq a)$}.
\end{cases}
\end{equation*}
where $\widetilde{\xi}_i = \xi_i \big|\xi_i >0$ is the normal $\mathcal{N}(0, 1)$ truncated at $a > 0$, whose probability density function is given by
\begin{equation*}
    f(t) = \frac{\phi(t) }{1 - \Phi(a)}, \text{ for }t > a.
\end{equation*}
From Lemma \ref{lem:property_truncated_normal}, we know the expectation of $B_i$ is 
\begin{equation*}
\E[B_i] = 0 + (1 - \Phi(a)) (\E[\widetilde{\xi}_i] - a^+)\mu = 0.
\end{equation*}
 Therefore, to prove that the mean-zero variable $B_i$ is sub-Gaussian, we only need to bound the probability of $\P(B_i > t)$ and $\P(B_i < - t)$ for any $t > 0$ with the form of $\exp(-\frac{t^2}{2\sigma^2})$ for some $\sigma>0$.

We will first prove the part for $\P(B_i > t)$ 
Note that
\begin{align*}
    \P(B_i > t) &= (1 - \Phi(a)) \P(\widetilde{\xi}_i\mu - a^+\mu > t)\\
    & = (1 - \Phi(a)) \P(\widetilde{\xi}_i\mu - a^+\mu > t)\\
    & = (1 - \Phi(a)) \P(\widetilde{\xi}_i>a^+ + t/\mu)\\
    & = (1 - \Phi(a)) \left(1 - \frac{(\Phi(a^+ + t/\mu) - \Phi(a))}{(1 - \Phi(a))}\right) \\
    & = 1 - \Phi(a^+ + t/\mu ) \\
    &\le 1 - \Phi(t/\mu ) \le \exp\left(-\frac{t^2}{2\mu^2}\right),
\end{align*}
where the fourth equality is due to (2) in Lemma \ref{lem:property_truncated_normal}, the first inequality is due to the fact that $a^+ \ge a > 0$, and the last inequality is due to the fact that $\mathcal{N}(0, \mu^2)$ is sub-Gaussian with parameter $\mu^2.$

We now prove the other side. Observe that $B_i > \mu (a^+ - a)$, so for $t > \mu (a^+ - a)$ we have $\P(B_i < - t) = 0$. Therefore, we only need to bound $\P(B_i < - t)$ for $0 < t <= \mu (a^+ - a)$. Note that in this range, 
\begin{align*}
    \P(B_i < - t) \le \P(B_i < 0) &= (1 - \Phi(a))\P(\widetilde{\xi}_i\mu - a^+\mu < 0) \\
    &=(1 - \Phi(a))\P(\widetilde{\xi}_i <  a^+) \\
    & = \Phi(a^+) - \Phi(a),
\end{align*}
where the last equality is again due to (2) of Lemma \ref{lem:property_truncated_normal}. On the other hand,  we have for any $\sigma>0$,
\begin{equation*}
\exp\left( - \frac{t^2}{2\sigma^2}\right) \ge \exp\left( - \frac{(\mu (a^+ - a))^2}{2\sigma^2}\right),  \text{ for any }0 < t \le \mu (a^+ - a).
\end{equation*}
And with the choice of $\sigma^2 = \mu^2\frac{(a^+ - a)^2}{2\log(\Phi(a^+) - \Phi(a))}$, we have 
\begin{align*}
    \exp\left( - \frac{t^2}{2\sigma^2}\right) &\ge \exp\left( - \frac{(\mu (a^+ - a))^2}{2\sigma^2} \right)=\Phi(a^+) - \Phi(a) \ge \P(B_i < - t)
\end{align*}
holds for all $0 < t \le \mu (a^+ - a)$.

Therefore, combining the two sides, we know that $B_i$ is sub-Gaussian with parameter $\max\{\mu^2$, $ \mu^2\frac{(a^+ - a)^2}{2\log(\Phi(a^+) - \Phi(a))}\}$, which concludes the proof.

\end{proof}

\begin{lemma}\label{lem:property_truncated_normal}
For a truncated normal distribution $\widetilde{\xi}_i$ with density $f(t) = \frac{\phi(t) }{1 - \Phi(a)}, \text{for }t > a$ we have
\begin{enumerate}
    \item $\E(\widetilde{\xi}_i) = \frac{\phi(a)}{1 - \Phi(a)}$.
    \item $\P(\widetilde{\xi}_i \le t) = \frac{\Phi(t) - \Phi(a)}{1 - \Phi(a)}$.
\end{enumerate}
\end{lemma}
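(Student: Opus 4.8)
The plan is a direct computation from the definition of the truncated normal density $f(t) = \phi(t)/(1-\Phi(a))$ supported on $(a,\infty)$, using only the standard facts $\Phi' = \phi$ and the elementary identity $\phi'(t) = -t\,\phi(t)$ (immediate from $\phi(t) = (2\pi)^{-1/2}e^{-t^2/2}$). As a preliminary I would record that $\int_a^\infty \phi(t)\,dt = 1-\Phi(a)$, so $f$ integrates to $1$ and is indeed a probability density; this also makes the normalization constant $1/(1-\Phi(a))$ in all subsequent integrals unambiguous.

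For claim (1), I would write
\[
\E(\tilde\xi_i) \;=\; \int_a^\infty t\,f(t)\,dt \;=\; \frac{1}{1-\Phi(a)}\int_a^\infty t\,\phi(t)\,dt ,
\]
and then substitute $t\,\phi(t) = -\phi'(t)$, so that the remaining integral is evaluated by the fundamental theorem of calculus:
\[
\int_a^\infty t\,\phi(t)\,dt \;=\; -\phi(t)\Big|_{t=a}^{t=\infty} \;=\; \phi(a) ,
\]
the upper endpoint contributing $0$ since $\phi(t)\to 0$ as $t\to\infty$. This yields $\E(\tilde\xi_i) = \phi(a)/(1-\Phi(a))$, as claimed.

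For claim (2), for $t \ge a$ I would simply integrate the density up to $t$, using $\Phi' = \phi$:
\[
\P(\tilde\xi_i \le t) \;=\; \int_a^t f(s)\,ds \;=\; \frac{1}{1-\Phi(a)}\int_a^t \phi(s)\,ds \;=\; \frac{\Phi(t)-\Phi(a)}{1-\Phi(a)} ,
\]
while for $t < a$ the event is empty and the probability is $0$; the stated identity is to be read for $t \ge a$.

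I do not expect a genuine obstacle here: the only points that call for any care are the vanishing boundary term $\lim_{t\to\infty}\phi(t)=0$ used in claim (1) and the implicit restriction $t\ge a$ in claim (2), both of which are trivial. The remainder is routine integration against the normal density, and the two identities feed directly into the proof of Lemma~\ref{lem:Bi_SG}, where $\E[B_i]=0$ and the two-sided tail estimates for $B_i$ are read off from exactly these formulas.
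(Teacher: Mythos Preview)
Your proposal is correct and matches the paper's intended approach: the paper simply states that these are well-known truncated normal properties, ``easy to prove from the density function,'' and omits the proof entirely. Your direct computation via $\phi'(t)=-t\,\phi(t)$ and integration of the density is exactly the routine argument the paper alludes to.
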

\begin{proof}[Proof of Lemma \ref{lem:property_truncated_normal}]
This is based on several well-known truncated normal properties, and is easy to prove from the density function. Therefore we omit the proof here. 
\end{proof}

\section{Details of Edgeworth approximation error}
\label{appen:details_edgeworth}
The following discussion is largely adapted from~\cite{derumigny2021explicit} to be self-contained. For a distribution $P$, let $f_{P}$ denote its characteristic function; similarly, for a random variable $X$, we denote by $f_{X}$ its characteristic function. We recall that $f_{\mathcal{N}(0,1)}(t)=e^{-t^{2} / 2}$. Some constants are used in the definition.
\begin{itemize}
    \item Denote by $\chi_{1}$ the constant $\chi_{1}:=\sup _{x>0} x^{-3}\left|\cos (x)-1+x^{2} / 2\right| \approx 0.099162$ \citep{shevtsova2010refinement},
    \item  Denote by $\theta_{1}^{*}$ the unique root in $(0,2 \pi)$ of the equation $\theta^{2}+2 \theta \sin (\theta)+6(\cos (\theta)-1)=0$,
    \item Denote by  $t_{1}^{*}:=\theta_{1}^{*} /(2 \pi) \approx 0.635967$ \citep{shevtsova2010refinement}.
\end{itemize}

\subsection{Details of first-order Edgeworth expansion}
\label{appendx:e1}

We now provide details on the first-order Edgeworth expansion in Lemma \ref{lem:first_order_edgeworth_bound}. The main narrative is adapted from \cite{derumigny2021explicit}. 

We first define the reminder term $r_{1,m}$. To this end, we define \begin{equation*}
\Psi(t):=\frac{1}{2}\left(1-|t|+i\left[(1-|t|) \cot (\pi t)+\frac{\operatorname{sign}(t)}{\pi}\right]\right) \mathbbm{1}\{|t| \leq 1\}
\end{equation*}
where $i$ is the imaginary number. 
Note that from \cite{prawitz1975remainder} we have the following bound for function $\Psi$:\begin{equation*}
    |\Psi(t)| \leq \frac{1.0253}{2 \pi|t|} \text { and }\left|\Psi(t)-\frac{i}{2 \pi t}\right| \leq \frac{1}{2}\left(1-|t|+\frac{\pi^{2}}{18} t^{2}\right).
\end{equation*}
We further define
\begin{align*}
I_{3,1}(T):&=\frac{2}{T} \int_{0}^{\sqrt{2 \varepsilon}\left(m / K_{4, m}\right)^{1 / 4}}|\Psi(u / T)|\left|f_{S_{m}}(u)-e^{-u^{2} / 2}\left(1-\frac{i u^{3} \lambda_{3, m}}{6 \sqrt{m}}\right)\right| d u\\
I_{3,2}(T):&=\frac{2}{T} \int_{\sqrt{2 \varepsilon}\left(m / K_{4, m}\right)^{1 / 4}}^{t_{0} T}|\Psi(u / T)|\left|f_{S_{m}}(u)-e^{-u^{2} / 2}\right| d u\\
I_{3,3}(T):&=\frac{2}{T} \frac{\left|\lambda_{3, m}\right|}{6 \sqrt{m}} \int_{\sqrt{2 \varepsilon}\left(m / K_{4, m}\right)^{1 / 4}}^{t_{0} T}|\Psi(u / T)| e^{-u^{2} / 2}|u|^{3} du, 
\end{align*}
and $r_{1, m}$ is defined to be
\begin{align}
r_{1, m} &:=\frac{(14.1961+67.0415) \widetilde{K}_{3, m}^{4}}{16 \pi^{4} m^{2}}+\frac{\left|\lambda_{3, m}\right| \exp \left(-2 m^{2} / \widetilde{K}_{3, m}^{4}\right)}{3 \pi \sqrt{m}}+I_{3,2}(T)+I_{3,3}(T) \notag\\
&+\frac{1.0253}{\pi} \int_{0}^{\sqrt{2 \varepsilon}\left(n / K_{4, m}\right)^{1 / 4}} u e^{-u^{2} / 2} R_{1, m}(u, \epsilon) d u.\label{eq:r_1,n}
\end{align}
For $\epsilon\in(0,1 / 3)$ and $t \geq 0$, we further define

{\footnotesize
\begin{align*}
R_{1,m}(t, \varepsilon):&=\frac{U_{1,1, m}(t)+U_{1,2, m}(t)}{2(1-3 \varepsilon)^{2}}
+e_{1}(\varepsilon)\left(\frac{t^{8} K_{4, m}^{2}}{2 m^{2}}\left(\frac{1}{24}+\frac{P_{1, m}(\varepsilon)}{2(1-3 \varepsilon)^{2}}\right)^{2}+\frac{|t|^{7}\left|\lambda_{3, m}\right| K_{4, m}}{6 m^{3 / 2}}\left(\frac{1}{24}+\frac{P_{1, m}(\varepsilon)}{2(1-3 \varepsilon)^{2}}\right)\right),\\
P_{1, m}(\varepsilon):&=\frac{144+48 \varepsilon+4 \varepsilon^{2}+\left\{96 \sqrt{2 \varepsilon}+32 \varepsilon+16 \sqrt{2} \varepsilon^{3 / 2}\right\} \mathbbm{1}\left\{\exists i \in\{1, \ldots, m\}: \mathbb{E}\left[(X_{i}-\mu_i)^{3}\right] \neq 0\right\}}{576},\\
e_{1}(\varepsilon):&=\exp \left(\varepsilon^{2}\left(\frac{1}{6}+\frac{2 P_{1, m}(\varepsilon)}{(1-3 \varepsilon)^{2}}\right)\right),\\
U_{1,1, m}(t):&=\frac{t^{6}}{24}\left(\frac{K_{4, m}}{m}\right)^{3 / 2}+\frac{t^{8}}{24^{2}}\left(\frac{K_{4, m}}{m}\right)^{2},\\
U_{1,2, m}(t):&=\left(\frac{|t|^{5}}{6}\left(\frac{K_{4, m}}{m}\right)^{5 / 4}+\frac{t^{6}}{36}\left(\frac{K_{4, m}}{m}\right)^{3 / 2}+\frac{|t|^{7}}{72}\left(\frac{K_{4, m}}{m}\right)^{7 / 4}\right) \mathbbm{1}\left\{\exists i \in\{1, \ldots, m\}: \mathbb{E}\left[(X_{i}-\mu_i)^{3}\right] \neq 0\right\} . 
\end{align*}
}

Observe the bound from Lemma \ref{lem:first_order_edgeworth_bound} is a bound of leading order $O(1/\sqrt{m})$, which is due to the fact that the variables in the sequence may not be identical since we may encounter non-identical compositions, and we do not require any continuous property of the densities (and their existence as well). When we have i.i.d.~distribution of absolute continuous density, we can guarantee to have an $O(1/m)$ explicit bound of the difference as 
\begin{equation*}
\Delta_{m, 1} \leq \frac{0.195 K_{4, m}+0.038 \lambda_{3, m}^{2}}{m}+\frac{1.0253}{\pi} \int_{a_{m}}^{b_{m}} \frac{\left|f_{S_{m}}(t)\right|}{t} d t+r_{2, m},
\end{equation*}
where $a_{m}:=2 t_{1}^{*} \pi \sqrt{m} / \widetilde{K}_{3, m}, b_{n}:=16 \pi^{4} m^{2} / \widetilde{K}_{3, m}^{4}$, and $r_{2, m}$ is a remainder term that depends only on $K_{3,m}, K_{4, m}$ and $\lambda_{3, m}$. Specifically, the term $r_{2, m}$ is defined by
\begin{equation*}
\begin{aligned}
r_{2, m} &:=\frac{1.2533 \widetilde{K}_{3, m}^{4}}{16 \pi^{4} m^{2}}+\frac{0.3334\left|\lambda_{3, m}\right| \widetilde{K}_{3, m}^{4}}{16 \pi^{4} m^{5 / 2}}+\frac{14.1961 \widetilde{K}_{3, m}^{16}}{16^{4} \pi^{16} m^{8}}+\frac{\left|\lambda_{3, m}\right| \exp \left(-128 \pi^{6} m^{4} / \widetilde{K}_{3, m}^{8}\right)}{3 \pi \sqrt{m}} \\
&+I_{5,2}(T)+I_{5,3}(T)+I_{5,4}(T)+J_{3}(T)+J_{5}(T) \\
&+\frac{1.0253}{\pi} \int_{0}^{\sqrt{2 \varepsilon}\left(m / K_{4, m}\right)^{1 / 4}} u e^{-u^{2} / 2} R_{1, m}(u, \varepsilon) d u .
\end{aligned}
\end{equation*}
Here, 
\begin{equation*}
\begin{aligned}
&I_{5,2}(T):=E_{1, m} \frac{\left|\lambda_{3, m}\right|}{3 T \sqrt{m}} \int_{\sqrt{2 \varepsilon}\left(m / K_{4, m}\right)^{1 / 4}}^{T^{1 / 4} / \pi}|\Psi(u / T)| u^{3} e^{-u^{2} / 2} d u, \\
&I_{5,3}(T):=E_{1, m} \frac{2}{T} \int_{\sqrt{2 \varepsilon}\left(m / K_{4, m}\right)^{1 / 4}}^{T^{1 / 4}}|\Psi(u / T)|\left|f_{S_{m}}(u)-e^{-u^{2} / 2}\right| d u, \\
&I_{5,4}(T):=E_{2, m} \frac{\left|\lambda_{3, m}\right|}{3 T \sqrt{m}} \int_{T^{1 / 4} / \pi}^{T / \pi}|\Psi(u / T)||u|^{3} e^{-u^{2} / 2} d u,
\end{aligned}
\end{equation*}
where $E_{1, m}:=1_{\left\{\sqrt{2 \varepsilon}\left.\left(m / K_{4, m}\right)^{1 / 4}<T^{1 / 4} / \pi\right\}\right.}$ and $E_{2, m}:=1_{\left\{T^{1 / 4}<T\right\}} .$ Further,  $T=16 \pi^{4}m^{2} / \widetilde{K}_{3, m}^{4}$. Note that if $T^{1 / 4}>T$ or $\sqrt{2 \varepsilon}\left(m / K_{4, m}\right)^{1 / 4}>T^{1 / 4} / \pi$, our bounds are still valid and can even be improved in the sense that the corresponding integrals can be removed. Further, we have the following bound for the terms $I_{5,2}$, $I_{5,3}$ and $I_{5,4}$:
\begin{equation*}
\begin{aligned}
I_{5,2}(T) & \leq \frac{\left|\lambda_{3, m}\right|}{3 \sqrt{m}} \int_{\sqrt{2 \varepsilon}\left(m / K_{4, m}\right)^{1 / 4}}^{T^{1 / 4} / \pi} \frac{1.0253}{2 \pi} u^{2} e^{-u^{2} / 2} d u \\
&=\frac{1.0253\left|\lambda_{3, m}\right|}{3 \pi \sqrt{2} \sqrt{m}}\left(\Gamma\left(3 / 2, \varepsilon\left(m / K_{4, m}\right)^{1 / 2}\right)-\Gamma\left(3 / 2, T^{1 / 2} / 2 \pi^{2}\right)\right),
\end{aligned}
\end{equation*}
\begin{equation*}
\begin{aligned} I_{5,3}(T) & \leq \frac{2}{T} \int_{\sqrt{2 \varepsilon}\left(m/ K_{4, m}\right)^{1 / 4}}^{T^{1 / 4} / \pi}|\Psi(u / T)| \frac{K_{3, m}}{6 \sqrt{m}}|t|^{3} \exp \left(-\frac{t^{2}}{2}+\frac{\chi_{1}|t|^{3} \widetilde{K}_{3, m}}{\sqrt{m}}+\frac{t^{2} \sqrt{K_{4, m}}}{2 \sqrt{m}}\right) d u \\ &=\frac{K_{3, m}}{3 \sqrt{m}} J_{2}\left(3, \sqrt{2 \varepsilon} /\left(m K_{4, m}\right)^{1 / 4}, T^{1 / 4} / \pi, \widetilde{K}_{3, m}, K_{4, m}, T, m\right) \end{aligned}
\end{equation*}
and 
\begin{equation*}I_{5,4}(T)=\frac{1.0253\left|\lambda_{3, m}\right|}{3 \pi \sqrt{2} \sqrt{m}}\left(\Gamma\left(3 / 2, T^{1 / 2} / 2 \pi^{2}\right)-\Gamma\left(3 / 2, T^{2} / 2 \pi^{2}\right)\right),
\end{equation*}
and all the terms converge exponentially fast to zero. Here $\Gamma(a,x)$ is the incomplete Gamma function and can be numerically evaluated. 

For the other terms, we have
\begin{equation*}
\begin{aligned}
&J_{3}(T):=\frac{2}{T} \int_{T^{1 / 4} / \pi}^{t_{1}^{*} T^{1 / 4}}|\Psi(u / T)|\left|f_{S_{m}}(u)\right| d u=\frac{2}{T^{3 / 4}} \int_{1 / \pi}^{t_{1}^{*}}\left|\Psi\left(v / T^{3 / 4}\right)\right|\left|f_{S_{m}}\left(T^{1 / 4} v\right)\right| d v, \\
&J_{4}(T):=1_{\left\{t_{1}^{*} T^{1 / 4}<T / \pi\right\}} \frac{2}{T} \int_{t_{0}^{*} T^{1 / 4}}^{T / \pi}|\Psi(u / T)|\left|f_{S_{m}}(u)\right| d u, \\
&J_{5}(T):=\frac{2}{T} \int_{T^{1 / 4} / \pi}^{T / \pi}|\Psi(u / T)| e^{-u^{2} / 2} d u .
\end{aligned}
\end{equation*}
Obviously, now all the above bounds are real integrations, and can be calculated numerically. 
\subsection{Improved $O(1/m)$ uniform bound for i.n.i.d distributed PLLRs}

In the previous section, we assume that PLLRs are identically distributed and obtain an $O(1/m)$ bound under certain regularity conditions. In fact, we can generalize such bound to the case when PLLRs are independent but not necessarily identically distributed. To this end, we resort to Corollary 3.2 in \cite{derumigny2021explicit} for the conditions on the tail behavior of characteristic functions. In particular, under a stronger assumption on the tail of the characteristic function,  we have the following Corollary as an extension for Lemma \ref{lem:first_order_edgeworth_bound}.
\begin{corollary}
Denote by $f_{\sum_{i=1}^m X_i}(t)$ as the characteristic function for the sum of random variables $X_i$. If there exists some positive constant $C_0$ such that 
$|f_{\sum_{i=1}^m X_i}(t)| \leq C_0|t|^{-2}$
for all $|t|>\frac{1}{\sqrt{n}}$, then
\begin{equation}
\Delta_{m, 1, X} \leq \frac{0.195K_{4, m}+0.038\lambda_{3, m}^2+1.0253C_0\pi^{-1}}{m}+o(m^{-1}).
\end{equation}
\end{corollary}
Note that this Corollary does not assume that $X_i$'s are identically distributed. Instead, it only assumes regularity conditions on the tail distribution of characteristic function. This gives a $O(1/m)$ bound for the general case of i.n.i.d. PLLRs.
\par We further comment here that as pointed out in \cite{derumigny2021explicit}, the assumption on characteristic function is not common in the statistical community. However, one can check that common distributions such as Gaussian distribution satisfies such regularity condition. Moreover, by \cite{ushakov1999some}, such tail condition on the characteristic functions
is satisfied whenever $P_{\sum_{i=1}^m X_i}$ has a density that is differentiable and such that its derivative is of bounded variation with total variation uniformly bounded in $m$. As a result, this provide an easy check whether a PLLR satisfies the condition and admits an $O(1/m)$ tail bound in the Edgeworth expansion.

\end{document}